\documentclass[12pt]{article}


\usepackage[cp1251]{inputenc}
\usepackage{xcolor}
\usepackage{amssymb,amsfonts,amsmath,amsthm} 
\textheight=235mm \textwidth=170mm \pagestyle{headings}
\voffset=-25mm \hoffset= -12mm
\thispagestyle{empty} \pagestyle{myheadings}

\usepackage{amssymb}
\usepackage{graphicx}
\graphicspath{}
\DeclareGraphicsExtensions{.pdf,.png,.jpg}

\numberwithin{equation}{section}
\newtheorem{theorem}{Theorem}

\newtheorem{lemma}{Lemma}

\theoremstyle{definition}
\newtheorem{definition}{Definition}

\theoremstyle{remark}
\newtheorem{remark}{Remark}

\newcommand{\R}{\mathbb{R}}

\begin{document}
\title
{Asymptotic soliton- and peakon-like solutions \\
of the modified Camassa--Holm equation \\
with variable coefficients and singular perturbation}

\noindent

\author{Yuliia Samoilenko \\
{\normalsize Institut Camille Jordan, Universit\'e Claude Bernard Lyon 1,}\\
{\normalsize 43, Boulevard du 11 Novembre 1918, 69622, Villeurbanne, France}\\
\and Lorenzo Brandolese \\
{\normalsize Institut Camille Jordan, Universit\'e Claude Bernard Lyon 1,}\\
{\normalsize 43, Boulevard du 11 Novembre 1918, 69622, Villeurbanne, France}\\
\and
Valerii Samoilenko \\
{\normalsize Department of Mathematical Physics,}\\
{\normalsize Institute of Mathematics of NAS of Ukraine,}\\
{\normalsize 3, Tereshchenkivs'ka Str., 01024 Kyiv, Ukraine}}

\maketitle

\abstract{We construct asymptotic soliton- and peakon-like solutions to the modified Camassa--Holm equation with variable coefficients and a singular perturbation.
This equation is a generalization of the well-known modified Camassa--Holm equation, which is an integrable system having  both smooth and peaked soliton solutions, named peakons.
The novelty of this paper lies in the development of a technique for constructing asymptotic peakon-like solutions.
We present a general scheme for finding asymptotic approximations of any order and we study the accuracy of these approximations.

The results are illustrated by nontrivial examples of both asymptotic soliton- and peakon-like solutions, for which we compute the first terms of their expansion.
Moreover, for various small values of the parameter, we present graphics illustrating the approximate solutions. These examples show that for an adequate description of wave process it suffices to obtain the main and the first terms of the corresponding asymptotic expansions.
The proposed technique can be used for constructing asymptotic wave-like solutions to other equations.}

\vskip5mm
\noindent{\bf MSC Classifications:} {76M45; 35C20; 35B25; 35Q35; 76B15}

\medskip
\noindent{\bf Keywords:} Modified Camassa--Holm equation; soliton solution; peakon solution; shallow water models; singular perturbation; WKB approximations; Fluid dynamics.

\vskip5mm

\section{Introduction}

Nonlinear partial differential equations are widely used to describe the propagation of waves at the surface in liquids.
The Boussinesq equation~\cite{Calogero}
\begin{equation} \label{Boussinesqu}
u_{tt} - c^2 u_{xx} - \alpha u_x^2 - \alpha u u_{xx} - \beta u_{xxxx}=0,
\end{equation}
where $c, \alpha > 0 $ and $ \beta \not=0$,
and the celebrated Korteweg--de Vries (KdV) equation \cite{KdV}
\begin{equation}\label{KdV}
u_{t} + u u_{x} + u_{xxx}=0
\end{equation}
are two early models that were successfully used to capture the dispersive
features of water wave propagation.
These equations were studied by various methods and approaches including analytical, numerical and algebraic-geometric techniques.
The highly intensive research on the KdV equation motivated the development of the theory of completely integrable infinite-dimensional dynamical systems and of new concepts, like that of \textit{soliton} \cite{Zabusky, Bullough}.
Solitons became a central object of modern mathematical physics because they are an important peculiarity of many integrable nonlinear evolution equations \cite{Ablowitz, Mykytyuk, Blacmore}. Soliton solutions describe wave processes localized in space, propagating with a
speed depending on the wave amplitude and interacting according to a nonlinear superposition principle \cite{Dodd,  Newell}.

The Boussinesq equation has only one-soliton solutions and is not an integrable system, whereas the Korteweg--de Vries equation has one-, two- and multi-soliton solutions and is completely integrable.
Subsequently, several new nonlinear partial differential equations with solitons were found, although not all of them are integrable systems.
In this connection, we can mention the regularized long-wave equation, which is also known as the Benjamin--Bona--Mahony (BBM) equation \cite{BBM, Peregrin, Dutykh}
\begin{equation}\label{BBM_1}
u_t + u_x + \alpha uu_x - u_{xxt}=0
\end{equation}
or the time regularized long-wave equation \cite{Joseph} also called the Joseph--Egri equation \cite{Fan}
\begin{equation}\label{TRLW_1}
u_t + u_x + \alpha uu_x + u_{xtt}=0.
\end{equation}
These ``KdV-like'' equations appeared when looking for possible alternatives of the original KdV equation in the study of shallow water wave equations \cite{Chiu, Chen}.
For example, one motivation of the BBM equation is that its solutions have better stability properties at high wave numbers.

In parallel, intense studies were carried in the direction of looking for new classes of integrable equations. A remarkable illustration of this fact is the Camassa--Holm equation, that was discovered twice:
the first time it appeared in a work by Fokas and Fuchssteiner~\cite{FF} as a member of a new large class of completely integrable nonlinear equations. Its hydrodynamical relevance was put in evidence only a few years later, when the same equation was re-proposed by Camassa and Holm~\cite{CamassaHolm} as an asymptotic model of the free-surface Euler equations. Later the Camassa--Holm equation appeared to be suitable for modelling turbulent flows \cite{Chen_1}.
This equation was studied by means of various techniques and approaches including the inverse scattering transform \cite{Shepelskiy, Constantin_1}, Hirota's bilinear method \cite{Parker}, B\"{a}cklund transformation \cite{Rasin}, numerical methods \cite{Sheu, Constantin_2}, etc.

After the paper by Camassa and Holm, this equation immediately attracted a considerable interest: it turned out to be better suited than the KdV equation
in modelling the propagation of waves of larger amplitude, for which the nonlinear
effects are often predominant and wave-breaking effects can appear during the evolution. Many papers are devoted to the study of wave breaking criteria.
Here we just mention the early papers~\cite{CamassaHolm, Constantin-Escher}
and, the ``local-in-space'' blowup criterion~\cite{BrandoleseCMP}, that encompasses the previous ones, and also shows that there is a deep connection between solitons
and wave breaking effects: if the initial profile decays faster than Camassa--Holm's peaked solitons at the spatial infinity, than a blowup of the solution will
occur after some time.
The Camassa--Holm (CH) equation can be written in the following form \cite{CamassaHolm}:
\begin{equation} \label{CHE_cons}
u_t - u_{xxt} + 3 u u_x = 2 u_x u_{xx} + uu_{xxx}.
\end{equation}

In this paper we will consider the following variant of the above equation, known in the literature as
the modified Camassa--Holm (mCH) equation, that is also often used for describing wave processes in shallow water (see, e.g., \cite{Chen}):
\begin{equation} \label{CHE_cons_mod}
u_t - u_{xxt} + 3 u^2 u_x = 2 u_x u_{xx} + uu_{xxx}.
\end{equation}

As the CH equation~\eqref{CHE_cons}, the mCH equation~\eqref{CHE_cons_mod} is a completely integrable dynamical system
and was studied by means of numerous methods including the Riemann--Hilbert approach \cite{Shepelskiy_1}.
Several results are available for the stability of peakon solutions \cite{Gao, Guo}, the well-posedness \cite{McLachlan}, the existence of global solutions, as well as blowup phenomena \cite{ChenLQZ}.

Both the CH equation and the mCH equation have solutions with different properties including soliton solutions, multi-soliton solutions, periodic solutions, and so on.
For example, the soliton solutions to the mCH equation can be written as \cite{Asif}:
\begin{equation} \label{sol_CHE_cons_mod}
\begin{split}
& u(x, t) = \frac{8 \mu^2}{\left(\mu + A \cosh{(2t-x)} - A \sinh{(2t-x)}\right)^2}  \\
& \hskip2cm - \frac{8 \mu}{\mu +A \cosh{(2t-x)} - A \sinh{(2t-x)}},
\end{split}
\end{equation}
where $ A $, $ \mu $ are reals and $ \mu A > 0$.

In addition to the general properties of soliton systems, the CH and mCH equations \eqref{CHE_cons}, \eqref{CHE_cons_mod} have the specificity of admitting
peaked solitons, named peakons. Similar features are also inherent in the famous Degasperis--Procesi equation \cite{Degasperis_Procesi, Brandolese} associated with \eqref{CHE_cons}.

Peakons have a peak at their crest, so their derivative at that point is discontinuous and has opposite signs on the left and right. Because of such a singularity, peakons are solutions of the corresponding equations in a weak sense, but otherwise they enjoy the usual properties of smoother soliton solutions. In particular, they describe localized waves that interact without collision and have a velocity that depends on the wave amplitude. The peakon solutions of the CH equation \eqref{CHE_cons} are given \cite{CamassaHolm} by the expression
\begin{equation} \label{peakon_sol_CH}
u(x, t) = c \, \exp{\left( - | x - c t | \right)},
\end{equation}
and  describe moving waves with a speed $ c > 0 $ equal to the height of the peakon.
If $ c < 0 $, then the wave moves to the left with a downward peak, and it is sometimes called an anti-peakon.

For describing wave processes in liquids with heterogeneous characteristics the above equations are no longer sufficient: more general versions with
variable coefficients have been proposed to this purpose.
In this more general case, the exact form of solutions is not known. For this reason, it is natural to look for asymptotic solutions that are close to the exact solutions to the same corresponding equations with constant coefficients.

In the case of a small dispersion of the medium, the powerful tools of asymptotic analysis have been successfully applied to this purpose.
For example, asymptotic soliton-like solutions were constructed for the Korteweg--de Vries equation with variable coefficients (vcKdV) and singular perturbations. The properties of such solutions are close to those of the classical KdV equation with constant coefficients \cite{Sam_2005}. For this equation a one-phase soliton-like solution as well as multi-phase soliton-like solutions \cite{Sam_2012_1, Sam_2012_2} were constructed by means of the nonlinear Wentzel--Kramers--Brillioun (WKB) method \cite{Miura}. The asymptotic soliton-like solution of the vcKdV equation can be also considered as a deformation of solitons of the usual KdV equation.

Analogously, for the singularly perturbed BBM equation with variable coefficients,  asymptotic soliton-like solutions were  found in~\cite{Sam_JMP}. Because of the nonexistence of multi-soliton solutions to the BBM equation with constant coefficients, the so-called asymptotic $ \Sigma $-soliton solutions to the singularly perturbed BBM equation with variable coefficients were obtained \cite{Sam_Sigma}. The concept of $ \Sigma $-soliton solution is based on the idea of splitting multi-soliton solutions into a set of one-soliton solutions at large values of the independent variables.
The constructed asymptotic soliton-like solutions are not always global due to the presence of variable coefficients. Nevertheless, the class of the vcKdV equations with  singular perturbation, for which asymptotic soliton-like solutions are global, is not empty. Nontrivial examples of such systems are presented in
\cite{Sam_MMC_2019, Sam_Jour_Aut_2021, Sam_MMC_2021_2}.

The present paper deals with the modified Camassa--Holm equation with variable coefficients (vcmCH) and a singular perturbation of the form:
\begin{equation} \label{CHE_vc}
a(x,t,\varepsilon) u_t - \varepsilon^2 u_{xxt} + b(x, t, \varepsilon) u^2 u_x - 2 \varepsilon^2 u_x u_{xx} - \varepsilon^2 uu_{xxx} = 0.
\end{equation}
Here $ \varepsilon $ is a small parameter.
We assume that the functions
$ a(x, t, \varepsilon) $ and  $ b(x, t, \varepsilon) $ with $ (x, t) \in \R \times [0;T] $ for some $ T > 0 $ can be presented as:
\begin{equation} \label{coeff}
a(x, t, \varepsilon) = \sum_{k=0}^N \varepsilon^k a_k(x, t) + O(\varepsilon^{N+1}),
\qquad b(x, t, \varepsilon) = \sum_{k=0}^N \varepsilon^k b_k(x, t) + O(\varepsilon^{N+1}),
\end{equation}
and $ a_0(x, t) \, b_0(x, t)\not= 0 $ for all $ (x, t) \in \R \times [0;T] $.

In the sequel, we use the following notation from asymptotic analysis \cite{Nayfeh}: if $\Psi$ is a function defined in $\R\times[0;T]$ and depending on a (small) parameter $\varepsilon>0$, then
$ \Psi (x,t, \varepsilon ) = O\left( \varepsilon^N \right) $ means that for any bounded and closed set $K\subset\R^n\times[0;T]$, there exist a positive value $ C $, possibly depending on the set~$K $, and $ \varepsilon_0 > 0 $ independent of $(x, t) $, such that the inequality $ | \Psi (x,t,\varepsilon ) | \le C \, \varepsilon^N $ holds for all $ \varepsilon \in (0,\varepsilon_0)$ and all $ (x,t) \in K $.

Equation \eqref{CHE_vc} generalizes the mCH equation \eqref{CHE_cons_mod}.
Let us recall that peakon and soliton solutions of the latter are written,
respectively, as \cite{LLCh}
\begin{equation}\label{peakon-sol_mCH}
u(x, t) = 2 \, \sinh^{-2} \left( \frac{|x-2t|}{2} + \textrm{arccoth} \, \sqrt{2} \right) ,
\end{equation}
and
\begin{equation}\label{soliton-sol_mCH}
u(x, t) = - 2 \, {\cosh}^{-2} \left(\frac{x-2t}{2}\right).
\end{equation}

The main aim of this paper is to obtain asymptotic peakon-like solutions to the vcmCH equation~\eqref{CHE_vc}. Since in some cases peakon solutions can be found as the limit of soliton solutions \cite{Johnson}, the problem of constructing asymptotic soliton-like solutions of equation~\eqref{CHE_vc} is also considered.
To attack these problems we will apply the nonlinear WKB method and make use of an appropriate modification of the basic ideas previously introduced to find asymptotic soliton-like solutions of the KdV-like equations \cite{Sam_2005} with variable coefficients, as well as asymptotic step-like solutions of the Burgers equation with a singular perturbation \cite{Sam_Burgers}.

For these both problems we will consider in detail the main steps of the algorithm for determining asymptotic solutions. We will derive the equations for the terms of the asymptotic expansions and establish the solvability of these equations in appropriate functional spaces.
The main results of the present paper are the calculation of the main terms of both asymptotic solutions in exact form and theorems on the accuracy with which such asymptotic solutions satisfy the original equation~\eqref{CHE_vc}.
We stress the fact that the formulas for the asymptotic peakon-like and soliton-like solutions do not coincide, which is natural, but their discontinuity curve is the same. Such discontinuity curve is determined by an ordinary first-order differential equation, and not by a second-order equation, as it is in the case of the KdV or the BBM equations \cite{Sam_2005, Sam_JMP}.

The obtained formulas for the asymptotic peakon-like solution of the vcmCH equation~\eqref{CHE_vc}, in the particular case when $ a(x,t,\varepsilon )$ and  $b(x,t,\varepsilon )$ are constant, are reduced to the formulas for the exact peakon-like solution of the mCH equation~\eqref{CHE_cons_mod}.
This confirms the idea that asymptotic peakon-like solutions for the vcmCH equation should be viewed as a deformation of the peakon solution for the usual mCH equation.

The paper is organized as follows.
In Section 2, we present preliminary remarks and formulate auxiliary notions, among which is the definition of an asymptotic soliton-like solution.
In Section 3, we describe in detail an algorithm for constructing asymptotic soliton-like solution to~\eqref{CHE_vc}.

We discuss the procedure for recursively finding the terms of the asymptotic expansions. In particular, we indicate the conditions under which the terms appearing in the singular part of the expansion fulfil the requirement of the definition of asymptotic soliton-like solution. We also discuss the accuracy with which the obtained solution satisfies the given equation. In the last part of Section~3 we illustrate an example of application of the general technique.

Section 4 and Section 5 are devoted to the realization of a similar program, for constructing asymptotic peakon-like solutions.

\section{Main definitions and form of the asymptotic solutions}

We denote by $ {\overline C}^\infty_0(\R) $  the space of infinitely differentiable functions $u\colon\R\to\R$,
satisfying the relation
$$
\frac{d^{\, n} u}{dx^n}(x) \, \to \, 0 \quad \text{as} \, \,  |x| \to +\infty ,
$$
for any non-negative integer $ n $.

Let $ \mathcal{S}(\R) $ be the Schwartz space, i.e., the space of infinitely differentiable functions on $ \R $  such that for any integer $ m, n \ge 0 $ the condition
$$
\sup_{x \in \R } \left| \, x^m \, \frac{d^{\, n} u}{dx^n}(x) \,
\right| < + \infty
$$
holds.

By $\mathrm{H}_s(\R) $, $ s \in \R $, we denote the Sobolev space \cite{Fritz, Evans}, i.e., the space of tempered distributions $\mathcal{S}^*(\R)$, whose Fourier transforms $ F[g](\xi) $ satisfy the condition
\begin{equation}\label{norma}
|| g ||_s^2 =  \int_{-\infty}^{+\infty} (1 + |\xi|^2)^s  \, \, | F[g](\xi) |^2 \, d \xi  < + \infty .
\end{equation}
The definition of the spaces $G$ and $G_0$ below is taken from \cite{Sam_2005, Maslov_book}.
We denote by $ G $ the space of infinitely differentiable functions
$f\colon \R \times [0;T] \times \R\to\R$
satisfying  the two following conditions:
\begin{enumerate}
\item[1)]
For any non-negative integers $ n $, $ p $, $ q $ and $ r $
$$
\lim_{\tau \to + \infty} \tau^n \frac{\partial\,^p}{\partial x^p} \, \frac{\partial \, ^q}{\partial
\, t^q} \, \frac{\partial \,^r}{\partial \tau^r} \, f (x, t, \tau) = 0,
$$
uniformly with respect to $ (x, t) \in K $, in any compact set $ K\subset \R \times [0;T] $.
\item[2)]
There exists a differentiable function $ f^-\colon \R\times[0;T]\to\R$ such that, for any non-negative integers $ n $, $ p $, $ q $ and $ r $
$$
\lim_{\tau \to - \infty} \tau^{n} \frac{\partial \,^p}{\partial \, x^p} \, \frac{\partial \, ^{q}}{\partial \, t^{q}}
\, \frac{\partial \,^{r}}{\partial \, \tau^{r}} \, \left( f (x, t, \tau) - f^{-}(x, t)\right) = 0,
$$
uniformly with respect to $ (x, t) \in K $, in any compact set $ K\subset \R \times [0;T] $.
\end{enumerate}

Let $ G_0$ be the subspace of $G$, consisting of all functions $ f\in G $ such that
$$
\lim_{\tau \to \, - \infty} f (x, t, \tau) = 0,
$$
uniformly with respect to the variables $ (x, t)\in K $, in any compact set $ K \subset \R \times [0;T] $.

We denote by $ \widetilde G $ the space of infinitely differentiable functions $ g\colon [0; T] \times {\R}  \to {\R} $, satisfying the two following conditions:
\begin{enumerate}
\item[1)]
For any non-negative integers $ n $, $ p $ and $ q $
$$
\lim_{\tau \to + \infty} \tau^n \frac{\partial\,^p}{\partial t^p} \, \frac{\partial \, ^q}{\partial
\, \tau^q}  \, g (t, \tau) = 0,
$$
uniformly with respect to $ t \in [0; T] $.
\item[2)]
There exists a differentiable function $ g^-\colon [0;T]\to \R$ such that for any non-negative integers $ n $, $ p $ and $ q $
$$
\lim_{\tau \to - \infty} \tau^{n} \frac{\partial \,^p}{\partial \, t^p} \, \frac{\partial \, ^{q}}{\partial \, \tau^{q}}
\, \left( g (t, \tau) - g^{-}(t)\right) = 0,
$$
uniformly with respect to $ t \in [0; T] $.
\end{enumerate}

Let $ \widetilde G_0 $ be the subspace of $ {\widetilde G} $, consisting of all functions $ g\colon [0; T] \times {\R} \to {\R} $, such that
\[
\begin{split}
\lim_{\tau \to - \infty} \, g (t, \tau) = 0,
\end{split}
\]
uniformly with respect to $ t \in [0; T] $.

The definition of an asymptotic soliton-like function is given below \cite{Sam_2005, Maslov_book}.

\begin{definition}
\label{definition 1}
A function  $ u = u(x, t, \varepsilon) $, where $ (x, t) \in \R \times [0;T] $, and $ \varepsilon>0 $ is a small parameter, is called an asymptotic
one-phase soliton-like function if for any integer $ N \ge 0 $
it can be represented in the form
\begin{equation}\label{2as_sol}
u(x, t, \varepsilon) = \sum_{j=0}^N \varepsilon^j
  \left[u_j(x, t) + V_j \left(x, t,  \tau \right)\right] + O(\varepsilon^{N+1}), \quad \tau = \frac{x - \varphi(t)}{\varepsilon},
\end{equation}
where $ \varphi\in C^{\infty} ([0;T]) $ is a scalar function,  $u_j\in C^\infty (\R\times [0;T]) $, for
$ j = 0, 1, \ldots , N$,  nontrivial function $V_0\in G_0, $ and  $ V_j \in G $, for $ j = 1, \ldots, N $.
\end{definition}

An asymptotic one-phase soliton-like solution to equation~\eqref{CHE_vc} is searched in the form~\eqref{2as_sol}.
The function $ \varphi $ is called \textit{a phase function} and will be defined as a solution of the first order differential equation that is found while constructing the asymptotic solution.
For a given asymptotic soliton-like solution~$u$ as in~\eqref{2as_sol}, the curve determined by the equation $ x - \varphi(t) = 0 $ is called  its \textit{discontinuity curve} \cite{Sam_2005, Sam_JMP}.

The regular part $ U_N(x,t,\varepsilon) = \sum_{j=0}^{N} \varepsilon^j  u_j(x, t) $ of asymptotic solution~\eqref{2as_sol} can be considered as a background function while the singular part
$
V_N(x,t, \varepsilon) = \sum_{j=0}^{N} \varepsilon^j V_j (x, t, \tau)
$
is required to reflect the soliton properties of the asymptotic solution.
This leads us to put on the singular terms $ V_j (x, t, \tau)$, $ j = 0,1, \ldots \, , $ appropriate functional contraints.

\section{Algorithm of constructing asymptotic soliton-like solutions}

Move on to a problem of constructing asymptotic one-phase soliton-like solutions \cite{Sam_2005} to equation~\eqref{CHE_vc}. We consider the case of zero background, i.e., we  assume that the function $ U_N(x,t,\varepsilon) \equiv 0 $. So, the solutions are searched as:
\begin{equation} \label{sol_one-phase}
u(x,t,\varepsilon) = \sum_{j=0}^N \varepsilon^j V_j \left(x, t,  \tau \right) + O(\varepsilon^{N+1}), \quad \tau = \frac{x - \varphi(t)}{\varepsilon}.
\end{equation}

The first term in expansion~\eqref{sol_one-phase}, i.e., the function $V_0 = V_0(x,t,\tau)$  is constructed as a solution to the third-order ordinary differential equation in the $\tau$-variable, with parameters $(x,t)\in\R\times[0;T]$ (we drop below, for simplicity, the dependence on $(x,t)$):
\begin{equation} \label{singular_part_0}
- a_0 \varphi' \, \frac{\partial V_0}{\partial \tau} + \varphi' \, \frac{\partial^3 V_0}
{\partial\tau^3} +  b_0 \, V_0^2 \, \frac{\partial V_0} {\partial\tau} -  \frac{\partial}{\partial\tau} \, \left( \frac{\partial V_0}{\partial\tau} \right)^2  - V_0 \, \frac{\partial^3 V_0}{\partial\tau^3} = 0.
\end{equation}
The above ODE originates requiring that $ \displaystyle (x,t)\mapsto V_0\left(x, t, \frac{x-\varphi(t)}{\varepsilon}\right)$
solves equation~\eqref{CHE_vc} in asymptotical sense. It means that we substitute expression~\eqref{sol_one-phase} into equation~\eqref{CHE_vc}, then multiply by $\varepsilon$ and let $\varepsilon\to0$.
Analogously, proceeding step-by-step for $k=1,\ldots,N$, inserting $ \displaystyle \sum_{j=0}^k \varepsilon^j V_j \left(x, t, \frac{x-\varphi(t)}{\varepsilon}\right)$ inside equation~\eqref{CHE_vc} and equalizing to zero the linear combination of all the terms with the same  power of $\varepsilon$, we get an ODE for $V_j=V_j(x,t,\tau)$, with $j=1,\ldots,N$. Namely,
\begin{equation}
\label{singular_part_1}
\begin{split}
&- a_0 \varphi' \, \frac{\partial V_j}{\partial \tau} + \varphi' \, \frac{\partial^3 V_j}
{\partial\tau^3} +  b_0\frac{\partial}{\partial\tau} \left(V_0^2 \, V_j \right)  - 2 \frac{\partial}{\partial\tau} \left(\frac{\partial V_0}{\partial \tau} \, \frac{\partial V_j}{\partial \tau} \right)\\
&\hskip6cm
-  V_0 \, \frac{\partial^3 V_j} {\partial\tau^3} - \frac{\partial^3 V_0}{\partial\tau^3} \, V_j = {F}_j,
\end{split}
\end{equation}
where the functions $F_j=F_j(x,t,\tau)$ (with  $j= 1, \ldots, N$) can be computed recursively, after functions $V_1$, \ldots, $V_{j-1}$ are determined in the previous step.

We remind that the solutions to equations~\eqref{singular_part_0}, \eqref{singular_part_1} have to belong to the spaces $ G_0 $, $ G $
correspondingly. Besides, while searching the functions $ V_{j} $, for $j= 0, 1, \ldots, N $, we have also to find a phase function
$ \varphi = \varphi(t) $ defining a discontinuity curve $ \Gamma = \{ (x,t) \in \R \times [0;T]: \, x = \varphi(t) \} $.

Taking into account these remarks we may study system~\eqref{singular_part_0}, \eqref{singular_part_1} as follows.
Firstly, we assume the function $ \varphi = \varphi(t) $ is known.
Then, equations \eqref{singular_part_0}, \eqref{singular_part_1} are considered on the restriction to the discontinuity curve $ \Gamma $, treating the variable $t$ as a parameter.
In this way, the function $v_0 =v_0(t, \tau) = V_0 (x, t, \tau) \bigr|_{ \, x = \varphi(t)} $ can be found in explicit form.
Secondly, we prove that $ v_0= v_0(t, \tau) $ is a rapidly decreasing function with respect to the variable $ \tau $, i.e., $ v_0\in {\widetilde G_0} $.

Then using property $ V_1 \in G $ we find the solution $ v_1(t, \tau) = V_1 (x, t, \tau)\bigr|_{ \, x = \varphi(t)} $ in explicit form too.
Moreover, we receive necessary and sufficient conditions on existence of the solution as a rapidly decreasing function as $ \tau \to +\infty $.
Later the conditions are used for deducing a nonlinear ordinary differential equation for the phase function $ \varphi = \varphi(t) $.

It should be mentioned that if the function $ V_0 \in G_0 $ then the function $ v_0(t, \tau) = V_0 (x, t, \tau) \bigr|_{ \, x = \varphi(t)} \in {\widetilde G_0} $, and if the function $ V_1 \in G $ then the function $ v_1(t, \tau) = $ \linebreak $ V_1 (x, t, \tau)\bigr|_{ \, x = \varphi(t)} \in {\widetilde G}. $

Now let us consider the algorithm in detail. Denote, for $j = 0, 1, \ldots, N$,
$$
v_j = v_j (t, \tau) = V_j (x, t, \tau) \bigr|_{\, x = \varphi(t)}.
$$
From \eqref{CHE_vc}, \eqref{singular_part_0}, \eqref{singular_part_1} it follows that the functions $ v_j (t, \tau) $, $ j = 0,  1, \ldots, N $, satisfy
differential equations:
\begin{equation} \label{sing_part_02}
\left( \, \varphi' - v_0 \, \right) \, \frac{\partial^3 v_0}{\partial\tau^3}- a_0(\varphi, t) \varphi' \, \frac{\partial v_0}{\partial \tau}
+  b_0(\varphi , t) \, v_0^2 \, \frac{\partial v_0} {\partial\tau} -  \, \frac{\partial}{\partial\tau} \left(\frac{\partial v_0}{\partial\tau} \right)^2
= 0,
\end{equation}
and
\begin{equation}\label{sing_part_j2}
\begin{split}
&\left( \, \varphi' - v_0 \, \right) \frac{\partial^3 v_j} {\partial\tau^3} - a_0(\varphi, t) \varphi' \, \frac{\partial v_j}{\partial \tau} +
\, b_0(\varphi, t) \, \frac{\partial }{\partial \tau } \left( v_0^2 \, v_j \, \right)\\
&\hskip4cm
 - 2 \, \frac{\partial }{\partial \tau } \left( \frac{\partial v_0}{\partial \tau} \, \frac{\partial v_j}{\partial \tau} \right) - \frac{\partial^3 v_0}{\partial \tau^3} \, v_j = {\cal F}_j,  \qquad j = 1, \ldots, N ,
 \end{split}
\end{equation}
where $ {\cal F}_j = {\cal F}_j (t, \tau)$   are recurrently defined after calculation of the functions
$V_0(x,t, \tau)\bigr|_{x=\varphi(t)}, $ $ V_1(x, t, \tau)\bigr|_{x=\varphi(t)} $, $ \ldots $\,, $ V_{j-1} (x,t, \tau)\bigr|_{x=\varphi(t)} $, for $j= 1, \ldots, N $.
Here and below we simplified the notations writing $\varphi$ instead of $\varphi(t)$.
In particular, for $ j = 1 $ we find
\begin{equation}\label{function_F}
\begin{split}
&{\cal F}_1 (t, \tau) = \frac{\partial^3 v_0}{\partial\tau^2 \partial t} - a_0(\varphi, t)
\frac{\partial v_0}{\partial t} \\
&\hskip2cm
+ \left[ \tau a_{0x} (\varphi , t) + a_1(\varphi , t) \right]\varphi' \frac{\partial v_0}{\partial \tau} - \left[\tau\, b_{0x} (\varphi, t) + b_{1} (\varphi, t) \right] v_0^2 \frac{\partial v_0}{\partial \tau}.
\end{split}
\end{equation}

\subsection{The main term of the asymptotic expansion}

Let us proceed to equation~\eqref{sing_part_02}.
Despite of the equation is nonlinear, a particular solution $v_0(t,\tau)$
can be found in explicit form, for an appropriate choice of the phase function $\varphi$.
Firstly, by integrating equation ~\eqref{sing_part_02} with respect to $ \tau $ we obtain
\begin{equation}\label{sing_part_0_12}
\left[\varphi' - v_0 \right] \, v_{0\tau\tau} - \frac{1}{2} \left( v_{0\tau} \right)^{\, 2} - a_0(\varphi, t) \varphi' \,  v_0 \,  + \frac{1}{3} \, b_0(\varphi, t) \, v_0^3 = C_1(t),
\end{equation}
where the constant of integration $ C_1(t) $ is chosen as $ C_1 (t)\equiv 0 $
since $ v_0 \in {\widetilde G_0} $.

A solution to equation~\eqref{sing_part_0_12} is taken in the form \cite{Asif}:
\begin{equation}\label{sol_sing_part_0_12}
v_0(t, \tau) = A_0 + \frac{A_1}{G} + \frac{A_2}{G^2},
\end{equation}
where the function $ G $ is supposed to be represented as:
\begin{equation}\label{G'}
G(\tau) = -\frac{\mu}{\lambda} + A \cosh( \lambda \tau ) - A \sinh( \lambda \tau ),
\end{equation}
with the values $ \lambda \not= 0 $, $ A_0 $, $ A_1 $, $ A_2 $ that are determined below, and arbitrary real $ \mu $, $ A $.
This implies that the function $ G(\tau) $ satisfies the first-order ODE
$$
G' + \lambda G + \mu = 0.
$$
Substituting expressions~\eqref{sol_sing_part_0_12}, \eqref{G'} into equation \eqref{sing_part_0_12} and equalizing the coefficients at the same powers of \, $ G $ \, provide us with a system of algebraic relations for the values $ \lambda $, $ A_0 $, $ A_1 $ and $ A_2 $ of the form:
$$
- \varphi' a_0(\varphi, t) A_0 + \frac{1}{3} \, b_0(\varphi, t) A_0^3 = 0,
$$
next
$$
- \left[a_0(\varphi, t) A_1 - \lambda^2 A_1 \right] \varphi' + b_0(\varphi, t) A_0^2 A_1 - \lambda^2 A_0 A_1 = 0,
$$
and
$$
- \left[ a_0(\varphi, t) A_2 -  3 \lambda \mu A_1 - 4 \lambda^2 A_2 \right] \varphi' + b_0(\varphi, t) A_0 A_1^2 +  b_0(\varphi, t) A_0^2 A_2
$$
$$
- \frac{3}{2} \lambda^2 A_1^2 - A_0 \left( 3 \lambda \mu A_1 + 4 \lambda^2 A_2\right) = 0.
$$
The other relations that one obtains are:
\[\begin{split}
&\left[10 \lambda \mu A_2 + 2 \mu^2 A_1 \right] \varphi' + \frac{1}{3} b_0(\varphi, t) A_1^3 + 2 b_0(\varphi, t) A_0 A_1 A_2 \\
&\qquad
- 4 \lambda \mu A_1^2 - 7 \lambda^2 A_1 A_2 - A_0 \left( 10 \lambda \mu A_2 + 2 \mu^2 A_1\right) = 0,
\end{split}
\]
and
\[ \begin{split}
&6 \mu^2 \varphi' A_2 + b_0(\varphi, t) A_0 A_2^2 + b_0(\varphi, t)  A_1^2 A_2 \\
&\qquad
- \frac{5}{2} \mu^2 A_1^2 - 17 \lambda \mu A_1 A_2 - 6 \lambda^2 A_2^2 -  6 \mu^2 A_0 A_2 = 0.
\end{split}
\]
The last useful relations are:
\[
\begin{split}
&b_0(\varphi, t) A_1 A_2^2 - 10 \mu^2 A_1  A_2  - 14 \lambda \mu A_2^2 = 0,\\
&b_0(\varphi, t) A_2^3 - 24 \mu^2 A_2^2  = 0.
\end{split}
\]
The above equalities lead us to set, in~\eqref{sol_sing_part_0_12}
$$
A_0 = 0, \quad A_1 = \frac{24 \lambda\mu }{ b_0(\varphi, t)}, \quad A_2 = \frac{24 \mu^2}{ b_0(\varphi , t)},
$$
and
\begin{equation}\label{lambda}
\lambda^2 = a_0(\varphi , t),
\end{equation}
where function $ \varphi $ is a solution to the first-order ODE:
\begin{equation}\label{discontinuity_eq}
\frac{d \varphi}{d t} = 6 \, \frac{a_0( \varphi , t)}{b_0( \varphi , t)}.
\end{equation}
Relation~\eqref{lambda} implies $a_0(\varphi(t),t) \ge 0$ for all $ t \in [0; T]$.
Because equation \eqref{discontinuity_eq} is nonlinear and in general its solution exists on a finite interval, we suppose that the function $ \varphi = \varphi(t) $ is defined at least on the interval $[0;T] $ for some $ T > 0 $.
This could be seen as a limitation of the method, because, on one hand, solitons are by definition global solutions and, on the other hand, by the above restriction, asymptotic soliton-like solutions that we are constructing, are, a priori,  not global in time.
But in fact, under suitable conditions on the coefficients $a_0$ and $b_0$, one can easily ensure, by the standard ODE theory, that the solution to~\eqref{discontinuity_eq} is globally defined in time. This happens, for example, when both $a_0$ and $b_0$ are in $C^1(\R\times\R_+)$, bounded together with their derivatives in $\R\times \R_+$, and if $b_0\ge\gamma$ for some $\gamma>0$.
Indeed, under these conditions the map $ \displaystyle(t,\varphi)\mapsto 6 \, \frac{a_0( \varphi , t)}{b_0( \varphi , t)}$ is globally Lipschitz with respect to the $\varphi$-variable.

So, according to~\eqref{sol_sing_part_0_12}, \eqref{G'} the function $ v_0 = v_0(t, \tau) $ for asymptotic solution \eqref{sol_one-phase} is written as follows
\begin{equation}\label{v_0}
\begin{split}
v_0(t, \tau)
&= \frac{24 \lambda^2 \mu^2}{ b_0(\varphi, t) \left(-\mu + A \lambda \cosh(\lambda \tau) - A \lambda \sinh(\lambda \tau) \right)^2} \\
&\qquad\qquad\qquad+ \frac{24 \lambda^2 \mu }{ b_0(\varphi, t) \left(- \mu + A \lambda \cosh(\lambda \tau) - A \lambda \sinh(\lambda \tau) \right)},
\end{split}
\end{equation}
with arbitrary reals $ \mu $, $ A $ and $ \lambda^2 = a_0(\varphi , t) $, $ \varphi = \varphi(t) $, $ t \in [0;T]$.

Under assumptions $ \lambda > 0 $, $ \mu < 0 $, $ A > 0 $ through simple calculations formula~\eqref{v_0} can be transformed into the form
\begin{equation} \label{v_0_1}
v_0(t, \tau) = - 6 \, \frac{a_0(\varphi, t)}{b_0(\varphi, t)} \, \cosh^{-2} \left( \sqrt{a_0(\varphi, t)} \, \, \frac{\tau}{2}+ C_0 \right) ,
\end{equation}
where $ C_0 $ does not depend on $ \tau $.

It is clear that function~\eqref{v_0_1} belongs to the space of rapidly decreasing functions with respect to $ \tau $.
In this case we can extend the function $ v_0 = v_0(t, \tau) $ to a function $V_0\in G_0$, such that
$ V_0(x,t, \tau )\bigl|_{x=\varphi(t)} = v_0(t, \tau) $.
One obvious way is to define $V_0(x,t,\tau):=V_0(\varphi(t),t,\tau)=v_0(t,\tau)$, i.e., to choose $V_0$ constant with respect to $x$.
In this way we do have $V_0\in G_0$.

Thus, the main term of the asymptotic one-phase soliton-like solution to the mCH equation with variable coefficients and singular perturbation \eqref{CHE_vc} is found, at least along the discontinuity curve,
in the form
\begin{equation} \label{main_term}
V_0(x,t,\tau)\bigl|_{x=\varphi(t)} = - 6 \, \frac{a_0(\varphi(t), t)}{b_0(\varphi(t), t)} \, \cosh^{-2} \left(\sqrt{a_0(\varphi(t), t)} \, \, \frac{\tau}{2} + C_0 \right).
\end{equation}

\begin{remark}
Formula~\eqref{main_term} gives a soliton-like function. In the case $ a(x,t, \varepsilon) = a_0(x, t) =1 $, $ b(x,t, \varepsilon) = b_0(x, t) =3 $ we have $ \varphi(t) = 2 t $, $ \tau = \frac{x-2t}{\varepsilon}$, and the obtained main term \eqref{v_0_1} of asymptotic soliton-like solution~\eqref{sol_one-phase} completely coincides with the exact soliton solution of the mCH equation \eqref{CHE_cons_mod} that is given by formula \eqref{soliton-sol_mCH}.
\end{remark}

\subsection{The higher terms of the asymptotics}

Let us move on to equations for the higher terms $ v_j(t, \tau ) $, $ j = 1, \ldots, N $.
Below we suppose the function $ \varphi = \varphi (t) $, $ t \in [0;T] $, be known and we treat the variable $t$ as a parameter. After integrating equation~\eqref{sing_part_j2} with respect to the variable $ \tau $ we come to the second-order inhomogeneous ODE:
\begin{equation}\label{ht_1}
(\varphi' - v_0) v_{j \tau \tau } - v_{0 \tau } \, v_{j \tau } + \left( - a_0(\varphi, t) \varphi' + b_0(\varphi, t) v_0^2 - v_{0 \tau \tau } \, \right)v_j  = {\Phi}_j,
\end{equation}
where
$$
{\Phi}_j = {\Phi}_j ( t, \tau) = \int_{-\infty}^\tau {\cal F}_j(t, \xi) \, d \xi + E_j(t), \quad j = 1, \ldots, N,
$$
and $ E_j(t) $, $ j = 1, \ldots, N $, is  a constant of integration.

In particular,
\begin{equation}
\label{function_Phi_1}
\begin{split}
\Phi_1 (t, \tau) &= \left[- a_0(\varphi, t) \frac{d}{dt} \left(\frac{A}{\alpha} \right) - \frac{A}{\alpha} \, \varphi' \, a_{0 x} (\varphi, t) + \frac{8}{45} \frac{A^3}{\alpha} \, b_{0x}(\varphi, t)  \right] (\tanh \kappa - 1)\\[3mm]
&\hskip1cm
+ A \left [ \varphi' \tau a_{0x}(\varphi, t) + \varphi' a_1(\varphi, t) \right] \cosh^{-2} \kappa\\[3mm]
&\hskip1cm
+ \left[ -2 \frac{d}{dt} ( \alpha A ) + \frac{4}{45} \frac{A^3}{\alpha} b_{0x} (\varphi,t)\right] \cosh^{-2} \kappa \tanh \kappa \\[3mm]
&\hskip1cm
+ \frac{1}{15}\frac{A^3}{\alpha} b_{0x} (\varphi,t) \cosh^{-4} \kappa \tanh \kappa - 6 \alpha A \, \kappa_t \cosh^{-4} \kappa \\[3mm]
&\hskip1cm
- \frac{1}{3} A^3 \left[  b_1 (\varphi, t)  + \tau  \,  b_{0x} (\varphi, t) \right] \cosh^{-6} \kappa,
\end{split}
\end{equation}
where
\begin{equation}
\label{functions_A,alpha}
\begin{split}
&\alpha = \alpha (t) = \frac{\sqrt{a_0(\varphi , t)}}{2},
  \qquad A = A(t) = - 6\frac{a_0(\varphi , t)}{b_0(\varphi , t)},\\
&\kappa = \kappa(t, \tau) = \alpha(t) \tau + C_0,
\end{split}
\end{equation}
the function $ \varphi = \varphi (t) $,  for $ t \in [0;T] $, is a solution to ODE \eqref{discontinuity_eq} and $ C_0 = C_0(t) $ is an arbitrary value.

Now, let us  introduce the differential operator
\begin{equation} \label{operator_L}
L = L\left(t, \tau, \frac{d}{d \tau}\right): = \rho(t, \tau) \frac{d^2}{d\tau^2} - v_{0\tau} \frac{d}{d\tau} + \left[ \, - a_0(\varphi, t)\varphi' + b_0(\varphi, t) v_0^2 - v_{0\tau\tau} \, \right],
\end{equation}
where $ \rho(t, \tau) = \varphi'(t) - v_0 (t, \tau) $, $ t \in [0;T] $,
and rewrite linear differential equation~\eqref{ht_1} in the operator form
\begin{equation}\label{equation_L}
L v = \Phi.
\end{equation}

Remind that $ v = v(t, \tau) $, $ \Phi = \Phi(t, \tau) $, and $ t \in [0;T] $ is a parameter.

Coefficients of the differential operator $ L $ in~\eqref{operator_L} depend only on the values $ a_0(\varphi(t), t) $, $ b_0(\varphi(t), t) $, $ t \in [0;T] $. Thus, it is completely defined by conditions of the problem~\eqref{CHE_vc} under consideration.

We make use operator equation~\eqref{equation_L} to find conditions under which differential equation~\eqref{ht_1} has a solution from the space of rapidly decreasing functions with respect to the variable $ \tau $. To do it we apply results of the theory pseudodifferential operators \cite{Hor}, in particular from
\cite{GR1, GR2}.

\subsection{ Solvability of operator equation~\eqref{equation_L} in the space $ \mathcal{S}(\R) $ }

The following theorem is true.

\begin{theorem}
\label{theo1}
Let the following conditions be fulfilled:
\begin{enumerate}
\item[1.]
For all $t\in[0;T]$, $a_0(\varphi(t), t)>0$;
\item[2.]
The function $ \tau\mapsto \Phi(t,\tau)$ belongs to $\mathcal{S}(\R) $ for all $ t \in [0;T] $.
\end{enumerate}
Then, for any $t\in[0;T]$ equation~\eqref{equation_L} has a solution $v(t,\cdot) $ in the space $ \mathcal{S}(\R) $ if and only if the function $ \Phi  $
satisfies the orthogonality condition of the form
\begin{equation} \label{ort_cond}
\int_{-\infty}^{+\infty} \Phi_\tau (t, \tau) v_0(t, \tau) d \tau = 0,
\qquad t\in [0;T],
\end{equation}
where the function $ v_0 $ is defined with formula~\eqref{v_0_1}.
\end{theorem}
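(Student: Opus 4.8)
The plan is to recognize equation~\eqref{equation_L} as a linear second-order ODE with smooth, rapidly-stabilizing coefficients, and to establish solvability in $\mathcal{S}(\R)$ via a Fredholm-type alternative. The key observation is that $v_0$ itself generates a solution of the homogeneous equation $Lv=0$. Indeed, differentiating the first-order reduction~\eqref{sing_part_0_12} (with $C_1\equiv0$) in $\tau$ reproduces, after rearrangement, exactly the operator $L$ applied to $v_{0\tau}$; more directly, the translation invariance of the autonomous profile equation~\eqref{sing_part_02} means that $\partial_\tau v_0$ lies in the kernel of the linearized operator, which is precisely $L$. So first I would verify the identity $L\,v_{0\tau}=0$ by direct substitution, using~\eqref{v_0_1} and~\eqref{sing_part_0_12}. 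Since $v_0$ is rapidly decreasing (formula~\eqref{v_0_1} is a $\cosh^{-2}$ profile), so is $v_{0\tau}$, giving a genuine $\mathcal{S}(\R)$-element in the kernel.

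\textbf{Self-adjoint structure and the orthogonality condition.}

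Next I would exhibit the formal adjoint structure that produces condition~\eqref{ort_cond}. Writing $L=\rho\,\partial_\tau^2-v_{0\tau}\partial_\tau+q$ with $\rho=\varphi'-v_0$ and $q=-a_0\varphi'+b_0v_0^2-v_{0\tau\tau}$, one checks that $\rho'=-v_{0\tau}$, so $L$ is in Sturm--Liouville form: $Lv=(\rho\,v_\tau)_\tau+qv$, and hence $L$ is formally self-adjoint. Consequently, for $u,w$ decaying at $\pm\infty$, integration by parts gives $\int (Lu)\,w=\int u\,(Lw)$ with no boundary contribution. Applying this with $w=v_{0\tau}$ and using $L\,v_{0\tau}=0$ yields, for any $\mathcal{S}(\R)$-solution $v$ of $Lv=\Phi$, the necessary condition $\int_{-\infty}^{+\infty}\Phi(t,\tau)\,v_{0\tau}(t,\tau)\,d\tau=0$. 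Since $v_{0\tau}=\partial_\tau v_0$, this is precisely~\eqref{ort_cond} after relabelling $\Phi_\tau$; here I would be careful to reconcile the statement's $\Phi_\tau$ with the kernel element $v_{0\tau}$, noting that an integration by parts exchanges the derivative between $\Phi$ and $v_0$ (both decay, so boundary terms vanish), which is the reason~\eqref{ort_cond} is written with $\Phi_\tau\,v_0$ rather than $\Phi\,v_{0\tau}$.

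\textbf{Sufficiency: constructing the decaying solution.}

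For the converse, I would solve the ODE explicitly by reduction of order, exploiting the known kernel element $\psi_1:=v_{0\tau}$. A second, independent homogeneous solution is $\psi_2=\psi_1\int^\tau \rho^{-1}\psi_1^{-2}$; since condition~1 guarantees $a_0(\varphi,t)>0$, we have $\lambda>0$ and $\rho=\varphi'-v_0$ does not vanish for $|\tau|$ large (as $v_0\to0$ and $\varphi'=6a_0/b_0\neq0$), so the Wronskian $W=\rho(\psi_1\psi_2'-\psi_1'\psi_2)$ is a nonzero constant away from the zeros of $\rho$. The variation-of-parameters formula then produces a particular solution $v_p$, and the two-parameter family $v=v_p+c_1\psi_1+c_2\psi_2$ exhausts all solutions. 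The decisive point is the asymptotic analysis as $\tau\to\pm\infty$: since the coefficients of $L$ converge exponentially to the constant-coefficient limit with characteristic values $\pm\lambda$ and $0$ (the limiting operator being $\varphi'\,\partial_\tau^2+(-a_0\varphi')$, whose indicial behaviour is governed by $\lambda^2=a_0$), the homogeneous solutions behave like $e^{\pm\lambda\tau}$ and a bounded mode. I would show that $\psi_1\sim e^{-\lambda\tau}$ decays on both ends while $\psi_2$ grows, so that decay of $v$ at $+\infty$ fixes one constant and decay at $-\infty$ fixes the other, leaving exactly the orthogonality relation~\eqref{ort_cond} as the single solvability obstruction. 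When~\eqref{ort_cond} holds, the secular (growing) contribution in $v_p$ cancels, and the remaining freedom $c_1\psi_1$ keeps $v$ in $\mathcal{S}(\R)$; smoothness and rapid decay of every $\tau$-derivative follow from hypothesis~2 ($\Phi\in\mathcal{S}(\R)$) together with the exponential decay of the coefficients and of $\psi_1$. The main obstacle will be this uniform asymptotic control of $\psi_2$ and of the variation-of-parameters integral: one must verify that the particular solution genuinely decays at \emph{both} infinities precisely under~\eqref{ort_cond}, rather than merely at one of them, and that the alternative is not spoiled by the single bounded (non-decaying) mode of the limiting operator.
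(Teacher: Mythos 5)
Your argument is correct in substance, but it takes a genuinely different route from the paper's. The paper treats $L$ with pseudodifferential machinery: it checks that the symbol of $L$ satisfies Grushin's conditions, concludes that $L\colon \mathrm{H}_{s+2}(\R)\to\mathrm{H}_{s}(\R)$ is Noetherian (normally solvable), imposes orthogonality of $\Phi$ to $\ker L^*$, and then uses elliptic regularity for operators with polynomial coefficients together with Sobolev embeddings to upgrade Sobolev solutions to Schwartz solutions; finally it identifies $\ker L^*$ as the one-dimensional span of $v_{0\tau}$ by the same Abel's-formula/Wronskian argument you use for your $\psi_2$. You instead stay entirely within classical ODE theory: the observation that $\rho_\tau=-v_{0\tau}$ puts $L$ in Sturm--Liouville form $Lv=(\rho v_\tau)_\tau+qv$ is a structural fact the paper never states explicitly, and it reconciles your claim $v_{0\tau}\in\ker L$ with the paper's claim $v_{0\tau}\in\ker L^*$ (formally, $L=L^*$); necessity then follows by integration by parts, and sufficiency by variation of parameters plus asymptotic matching of the two modes $e^{\pm\lambda\tau}$, the exponentially growing contribution being cancelled exactly by the orthogonality condition. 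Your route is more elementary and self-contained and produces an explicit solution formula; the paper's route is heavier but more robust, since it does not rely on the operator being second order or on having an explicit kernel element in closed form. Both proofs end identically, converting $\int_{-\infty}^{+\infty}\Phi\, v_{0\tau}\,d\tau=0$ into \eqref{ort_cond} by one further integration by parts.

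Three small repairs are needed in your write-up, none of which affects correctness. First, you only argue that $\rho=\varphi'-v_0$ is nonvanishing for $|\tau|$ large, but the variation-of-parameters construction requires $\rho\neq0$ on all of $\R$; this holds by the explicit formula $\rho=6\,\frac{a_0}{b_0}\left(1+\cosh^{-2}\left(\sqrt{a_0}\,\frac{\tau}{2}+C_0\right)\right)$, i.e.\ the paper's relation \eqref{coeff_oper_equat}. Second, your reduction-of-order formula $\psi_2=\psi_1\int\rho^{-1}\psi_1^{-2}$ degenerates at the single zero of $\psi_1=v_{0\tau}$ (the crest of the soliton); you should either note that $\psi_2$ extends smoothly across this point by standard ODE theory (it does, since $\rho\neq0$ there), or define $\psi_2$ by Cauchy data at that point — the paper's $w_0$ has the same blemish. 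Third, your mention of ``characteristic values $\pm\lambda$ and $0$'' and of a ``bounded mode'' is a slip: the limiting constant-coefficient operator of the second-order equation is $\varphi'\left(\partial_\tau^2-a_0\right)$, whose only roots are $\pm\lambda$; the zero root belongs to the third-order equation \eqref{sing_part_j2} before integration in $\tau$, and plays no role in the solvability analysis of \eqref{equation_L}.
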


While proving Theorem 1 we use some notations that we now remind.
For any function $ h \in \mathcal{S}(\R) $ its Fourier transform  is given as:
$$
F[h](\xi) = \int_{-\infty}^{+\infty} e^{- i \xi x} h(x) d x.
$$

Due to the properties of the Fourier transform for any differential operator
$$
p \left( x, \frac{d}{dx} \right) = \sum_{k = 0}^n a_k(x) \frac{d^{\, k}}{dx^k} , \quad x \in \R,
$$
it is possible to define its action on a function $ h \in \mathcal{S} (\R) $ as:
\begin{equation} \label{symbol}
p \left( x, \frac{d}{dx} \right) h (x) = \frac{1}{2\pi} \int_{-\infty}^{+\infty} e^{i x \xi} \, p(x, \xi) F[h](\xi)  \,   d \xi ,
\end{equation}
where
$$
p \left( x, \xi \right) = \sum_{k = 0}^n a_k(x) (-i \xi)^k.
$$

In the sequel we make use of the following notations \cite{Hor}.
Let $ S^{\, m} $, $ m \in {\mathbb N} $, be a set of symbols $ p=p(x, \xi) $ $ \in $ $ \mathrm{C}^{\infty} (\R^2) $ such that for any $ k $, $ l
\in {\mathbb N}\cup\{0\} $ the inequality
$$
\left| \frac{\partial^{k + l}}{\partial\xi^k \partial x^l} \, p(x, \xi)\right| \le C_{k l} \left( 1 + |\xi|\right) ^{m - k}, \quad (x, \xi) \in \R^2,
$$
holds, with  $ k,l \in {\mathbb N}\cup\{0\} $ and $C_{kl}$ are some constants independent on $(x,\xi)$.

Let $ S^{\, m}_0 \subset S^{\, m} $ be the set of symbols $ p $ satisfying the condition
$$
| p(x, \xi) | \le M(x) \left(1 + |\xi| \right)^m,
$$
where the value $ M(x) \to 0 $ as $ |x| \to + \infty $.

It is worthy to remind the following theorem.

\begin{theorem}[\cite{GR1}]
Let $ p \in S^{\, m} $ be a symbol such that
$$
\frac{\partial^{\, l} p(x, \xi)}{ \partial x^l} \in S^{\, m}_0, \quad l = 1, \, 2, \, \ldots \, ,
$$
and the inequality
\begin{equation} \label{Gr_symbol}
\lim_{(x\overline{, \xi) \to} \infty} \frac{|p(x, \xi)|}{(1 + |\xi|)^m}  > 0
\end{equation}
holds. Then the differential operator
$$
p \left( x, \frac{d}{dx}\right) : H_{s+m}(\R) \to H_s (\R)
$$
defined through formula~\eqref{symbol} is Noetherian for any $ s \in \R $.
\end{theorem}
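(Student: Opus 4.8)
The statement asserts that $p(x,d/dx)$ is a Noetherian (Fredholm) operator, that is, that it has closed range together with a finite-dimensional kernel and a finite-dimensional cokernel. The plan is to establish this via Atkinson's theorem: a bounded operator between Banach spaces is Noetherian if and only if it is invertible modulo compact operators. Accordingly, all the work consists in constructing a two-sided parametrix $q(x,d/dx)$ within the pseudodifferential calculus of the symbol classes $S^m$, and then in proving that the associated remainders are compact on the relevant Sobolev spaces.

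First I would exploit the ellipticity at infinity expressed by~\eqref{Gr_symbol}: it provides constants $c>0$ and $R>0$ with $|p(x,\xi)|\ge c\,(1+|\xi|)^m$ whenever $|(x,\xi)|\ge R$. Fixing a smooth cutoff $\chi(x,\xi)$ that vanishes on $\{|(x,\xi)|\le R\}$ and equals $1$ on $\{|(x,\xi)|\ge 2R\}$, I set $q(x,\xi):=\chi(x,\xi)/p(x,\xi)$, which is smooth since $p$ does not vanish on the support of $\chi$. The crucial verification is that $q\in S^{-m}$. The bound $|q|\le c^{-1}(1+|\xi|)^{-m}$ is immediate, and differentiating the quotient $\chi/p$ one sees that every $\xi$-derivative gains a factor $(1+|\xi|)^{-1}$ from the $S^m$-estimates on $p$, while every $x$-derivative produces factors $\partial_x^l p$. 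This is exactly where hypothesis~1, namely $\partial_x^l p\in S^m_0$ for $l\ge1$, is indispensable: it guarantees that the $x$-derivatives of the parametrix symbol stay controlled (in fact they decay in $x$). A routine induction then yields $|\partial_\xi^k\partial_x^l q|\le C_{kl}(1+|\xi|)^{-m-k}$, so $q\in S^{-m}$ and the operator $q(x,d/dx)\colon H_s\to H_{s+m}$ is bounded.

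Next I would invoke the composition calculus of the classes $S^m$, whose leading term is the product of symbols. The principal symbol of $q(x,d/dx)\,p(x,d/dx)$ is $qp=\chi$, so
$$
q(x,d/dx)\,p(x,d/dx)=I-(1-\chi)(x,d/dx)+R ,
$$
where $1-\chi$ has compact support in the phase space $(x,\xi)$, and the remainder $R$ is assembled from terms pairing $\xi$-derivatives of $q$ with $x$-derivatives of $p$; each such term has order at most $-1$ and, carrying a factor $\partial_x^l p\in S^m_0$, also decays in $x$. The decisive and most delicate step is to show that both remainder operators are compact on $H_{s+m}(\R)$. For $(1-\chi)(x,d/dx)$ the symbol is compactly supported in both variables, so the operator is spatially localized and lands in a fixed band of frequencies, and compactness follows from the Rellich--Kondrachov theorem. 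For $R$ one combines two effects: the order $-1$ produces a gain of one derivative (frequency-side compactness, Rellich on compact sets in $x$), while the decay $M(x)\to0$ coming from membership in $S^m_0$ controls the spatial tail. Concretely, cutting the symbol of $R$ at $|x|\le\rho$ splits it as $K_\rho+S_\rho$, where $K_\rho$ has spatially compactly supported symbol and is therefore compact, and $\|S_\rho\|\to0$ as $\rho\to\infty$; hence $R$ is a norm limit of compact operators and is itself compact. This combined spatial and frequency compactness is precisely what distinguishes the ellipticity-at-infinity hypothesis from mere interior ellipticity, and it is the technical heart of the argument.

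Finally, running the same construction on the other side --- or, equivalently, passing to formal adjoints and using that the classes $S^m$ and $S^m_0$ are stable under the adjoint operation --- yields $p(x,d/dx)\,q(x,d/dx)=I+K'$ with $K'$ compact on $H_s(\R)$ as well. Thus $q(x,d/dx)$ is a two-sided inverse of $p(x,d/dx)$ modulo compact operators, and Atkinson's theorem gives that $p(x,d/dx)\colon H_{s+m}(\R)\to H_s(\R)$ has finite-dimensional kernel, closed range, and finite-dimensional cokernel for every $s\in\R$; that is, it is Noetherian.
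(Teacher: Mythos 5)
The paper does not actually prove this statement: it is imported verbatim as a known result from Grushin~\cite{GR1}, so there is no internal proof to compare against. Your argument --- extracting uniform ellipticity at infinity from~\eqref{Gr_symbol}, building the parametrix symbol $q=\chi/p\in S^{-m}$ (with the hypothesis $\partial_x^l p\in S^{\,m}_0$ ensuring $\partial_x^l q\in S^{-m}_0$), proving compactness of the remainders by combining the order gain with the spatial decay $M(x)\to0$ and a norm-limit of spatially truncated operators, and concluding via Atkinson's theorem --- is correct and is essentially the classical proof given in the cited source, so it supplies exactly the argument the paper delegates to the literature.
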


\begin{proof}[Proof of Theorem~\ref{theo1}] A scheme of proving the theorem is based on ideas of papers \cite{Sam_Shrod, Sam_Shrod_1}. Firstly, let us show that the differential operator $ L: \mathrm{H}_{s+2} (\R) \rightarrow \mathrm{H}_s (\R)$ is the Noether operator for any $ s \in \R $. Next, we prove the solvability operator equation~\eqref{equation_L} in the Schwartz space $ \mathcal{S}(\R) $.

So, let us consider a symbol of the differential operator $ L $ having a form
\begin{equation} \label{symbol_L}
p (t, \tau, \xi) = - \rho(t, \tau) \xi^2 + i \xi v_0 + \left[ \, - \varphi' a_0 + b_0 v_0^2 - v_{0 \tau \tau } \, \right] ,
\end{equation}
where $ a_0 = a_0(\varphi, t) $, $ b_0 = b_0(\varphi, t) $, $ \varphi = \varphi(t) $, and $ t \in [0;T] $ is treated as a parameter, $ v_0 = v_0(t, \tau) $ is given via formula~\eqref{v_0_1}.

The symbol~\eqref{symbol_L} obeys to the inequality
$$
\left| \frac{\partial^{k+l}}{\partial\xi^k \partial \tau^l} \, \, p(t, \tau, \xi) \right| \le C_{k l} (1 +|\xi|)^{2}
$$
with some bounded values $ C_{k l}=C_{k l}(t), $ $ k, l \in {\mathbb N} \cup\{0\} $.

Moreover,
$$
\frac{\partial^{\,l}}{\partial \tau^l} \, \, p(t, \tau, \xi) \in S_0^{\, 2}, \quad l = 1, \, 2, \, \ldots \, .
$$

Because of formulae~\eqref{discontinuity_eq}, \eqref{v_0_1} for all $ \tau \in \R $ we have
\begin{equation} \label{coeff_oper_equat}
\rho(t, \tau) = 6 \, \frac{a_0(\varphi, t)}{b_0(\varphi, t)} \, \left[ 1 + \cosh^{-2} \left(\sqrt{a_0(\varphi , t)} \,\, \frac{\tau}{2} + C_0 \right) \right] \not= 0,
\end{equation}
where $ \varphi = \varphi(t) $, $ t \in [0;T]$, and condition~\eqref{Gr_symbol} of Theorem 2 holds for symbol~\eqref{symbol_L} for all $ t \in [0;T]$.

Thus, for any $ s \in \R $ the operator $ L: \mathrm{H}_{s+2} (\R) \rightarrow \mathrm{H}_{s} (\R) $ satisfies all conditions of Theorem~2, and is Noetherian, i.e., normally solvable operator.

Denote by $ L^* $ an operator being adjoint to the operator $ L $, given by formula~\eqref{operator_L}.
First, we study the case of a nontrivial kernel of $ L^* $. Next we consider the case with the trivial kernel of $ L^* $.

In the first case, due to the normal solvability of the operator $ L $, differential equation~\eqref{equation_L} is solvable in $ \mathrm{H}_{s+2}(\R) $ if and only if the orthogonality condition \cite{Sam_Shrod}
\begin{equation} \label{adjoint}
\langle  \Phi, \ker L^* \rangle = 0
\end{equation}
holds.

Applying Sobolev embedding theorems for the spaces $ \mathrm{H}_{s} (\R) $, $ s \in \R $, we deduce the inclusion $ v_0^* \in \overline{\mathrm{C}}_{\, 0}^{\, \, \infty}(\R) $ for any element $ v_0^* \in \ker {L^*} $.
As a consequence of the orthogonality condition~\eqref{adjoint}, one easily obtains that solution $ v(x) $ of equation~\eqref{equation_L} belongs to the space $ \bigcap_{s \in \R} H^s(\R)$ \cite{GR1}. Applying again Sobolev embedding theorems, we get $ v \in \overline{\mathrm{C}}_{\, 0}^{\,\,\infty} (\R) $.

Now let us demonstrate that $ v \in \mathcal{S}(\R) $. Indeed, the function $ v \in \overline{\mathrm{C}}_{\, 0}^{\, \, \infty} (\R) $ and it
satisfies an ordinary differential equation
\begin{equation} \label{quikly_decres_fun}
\frac{d^{\, 2} \, v}{d \tau^{2}} - a_0(\varphi , t)  v = f,
\end{equation}
where $ \varphi = \varphi (t) $, and $ t \in [0;T] $ is treated as a parameter,
\begin{equation}\label{right_side_function}
f = f(t,\tau) = \frac{1}{\rho(t, \tau)} \left[ v_{0 \tau } \, v_{\tau } -  \left( - a_0(\varphi , t) v_0 + b_0(\varphi , t) v_0^2 - v_{0 \tau \tau } \right) v + \Phi \right] ,
\end{equation}
the function $ v_0 = v_0(t, \tau) $ is given via formula~\eqref{v_0_1}.

On the other hand, differential equation~\eqref{quikly_decres_fun} is equivalent to relation~\eqref{ht_1} because of inequality~\eqref{coeff_oper_equat}. Remind that equation~\eqref{ht_1} is written in an operator form as~\eqref{equation_L}.

It is obvious that $ f \in \mathcal{S}(\R) $ with respect to the variable $ \tau \in \R $ since its every term in~\eqref{right_side_function} belongs to the space of rapidly decreasing functions in variable $\tau$ accordingly properties of the function $ v_0 = v_0(t, \tau) $ and condition $ 2 $ of Theorem~1. Remind also the inclusion $ \overline{\mathrm{C}}_{\, 0}^{\, \, \infty} (\R) \subset \mathcal{S}^*(\R) $.

So, due to properties of elliptic pseudodifferential operators with polynomial coefficients \cite{GR2} we come to the conclusion that any solution to equation~\eqref{equation_L} from the space $ \mathcal{S}^*(\R) $ belongs to the space $ \mathcal{S}(\R) $. As a result, we obtain that $ v \in \mathcal{S}(\R) $.
The last property allows us to consider the action of the operator $ L^* $ as an automorphism of the space $ \mathcal{S}(\R) $.

Let us proceed to clarifying the orthogonality condition~\eqref{adjoint}. The operator $ L^* : \mathcal{S}(\R) \to \mathcal{S}(\R) $ is written as:
$$
L^* = \frac{d^2}{d\tau^2} \, \rho(t, \tau) + \frac{d}{d \tau} v_{0\tau } - a_0(\varphi , t) \varphi' + b_0(\varphi , t) v_0^2 - v_{0 \tau \tau } \, .
$$

It is clear that the function $ v_{0\tau} (t, \tau) $ belongs to the kernel of the operator $ L^* : \mathcal{S}(\R) \to \mathcal{S}(\R) $. Another solution to the equation
$$
L^* v  = 0
$$
can be written making use Abel's formula
$$
w_0 (t, \tau) = v_{0\tau} (t, \tau) \int_{\tau_0}^\tau \frac{d \xi}{\rho(t, \xi)\, v_{0 \xi }^2 (t, \xi)}, \quad \tau_0 \in [ - \infty ; + \infty ) .
$$

Considering the Wronskian for the functions $ v_0 (t, \tau)$  and $ w_0 (t, \tau)$ as variable $ \tau $ tends to infinity we deduce that $ w_0 \not\in S(\R) $. Thus, the dimension of the kernel of the operator $ L^* : \mathcal{S}(\R) \to \mathcal{S}(\R) $ equals to $ 1 $. It allows us to represent the orthogonality condition~\eqref{adjoint} in the form:
\begin{equation}\label{orthog_cond_2}
\int_{-\infty}^{+\infty} \Phi(t, \tau) v_{0\tau} (t, \tau) \, d \tau =0, \quad t \in[0;T].
\end{equation}

Summarizing the arguments above, we conclude that equation~\eqref{equation_L} has a solution in the space $ \mathcal{S}(\R) $ if and only if the orthogonality condition~\eqref{orthog_cond_2} is satisfied. Due to the property $ \Phi \in \mathcal{S}(\R) $, we finally get condition~\eqref{ort_cond}.

Now let us study the case of the trivial kernel of $ L^* $.  Then equation~\eqref{equation_L} has a solution in the space $ \mathrm{H}_{s+2}(\R) $ for any $ \Phi \in \mathcal{S}(\R)$ because $ L: \mathrm{H}_{s+2} (\R) \rightarrow \mathrm{H}_{s} (\R) $ is the Noether operator.
In addition, from the above arguing it also follows that if the kernel of the operator $ L^* : \mathcal{S}(\R) \to \mathcal{S}(\R) $ is trivial, then equation~\eqref{equation_L} has a solution in the space $ \mathcal{S}(\R) $ for any $ \Phi \in \mathcal{S}(\R) $.
Theorem~\ref{theo1} is proved.
\end{proof}

\subsection{ Solvability of differential equation~\eqref{sing_part_j2} in the space $ {\widetilde G} $  }

Now consider equation~\eqref{sing_part_j2} for the function
$ v_j = v_j (t, \tau )$, $ j = 1, \ldots, N $.
We have the following lemmas.

\begin{lemma}
\label{Lem1}
Let be $ a_0(\varphi(t), t) > 0 $ for all \, $ t \in [0;T] $, \, and the function $ {\cal F}_j \in {\widetilde G_{\, 0}} $, $ j =
1, \ldots, N $. Then equation~\eqref{sing_part_j2} has a solution $ v_j \in {\widetilde G} $, $ j = 1, \ldots, N $, if and only if the function
$ {\cal F}_j $, $  j = 1, \ldots, N $, satisfies the orthogonality condition of the form:
\begin{equation} \label{ort_cond_20}
\int_{-\infty}^{+\infty} {\cal F}_j(t, \tau) v_0(t, \tau) \, d \tau = 0, \quad t \in [0;T], \quad j =1, \ldots, N,
\end{equation}
where the function $ v_0(t, \tau) $ is defined via formula~\eqref{v_0_1}.
\end{lemma}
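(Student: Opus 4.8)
The plan is to reduce the solvability of~\eqref{sing_part_j2} in $\widetilde{G}$ to the solvability of the operator equation~\eqref{equation_L} in $\mathcal{S}(\R)$ that was already settled in Theorem~\ref{theo1}. First I would observe that~\eqref{sing_part_j2} is exactly the $\tau$-derivative of the integrated equation~\eqref{ht_1}; hence a function $v_j$ solves~\eqref{sing_part_j2} if and only if it solves $L v_j = \Phi_j$ for the operator $L$ of~\eqref{operator_L}, with $\Phi_j(t,\tau)=\int_{-\infty}^{\tau}{\cal F}_j(t,\xi)\,d\xi+E_j(t)$ for some integration constant $E_j(t)$; in particular $\Phi_{j\tau}={\cal F}_j$. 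Because ${\cal F}_j\in\widetilde{G}_0$, the primitive $\int_{-\infty}^{\tau}{\cal F}_j\,d\xi$ decays rapidly as $\tau\to-\infty$ and tends to $I_j(t):=\int_{-\infty}^{+\infty}{\cal F}_j(t,\xi)\,d\xi$ as $\tau\to+\infty$. Since every element of $\widetilde{G}$ must decay at $+\infty$, inspecting the limiting form of~\eqref{ht_1} there (where $v_0$ and its $\tau$-derivatives vanish and $\rho\to\varphi'$) forces $\Phi_j\to0$ at $+\infty$, so I fix $E_j(t):=-I_j(t)$. With this choice $\Phi_j$ decays rapidly at $+\infty$ and tends to the constant $-I_j(t)$ at $-\infty$, i.e.\ $\Phi_j\in\widetilde{G}$.

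Next, to bring the data into $\mathcal{S}(\R)$ as Theorem~\ref{theo1} requires, I would peel off the nonzero limit of the solution at $-\infty$. The limiting form of~\eqref{ht_1} at $-\infty$ reads $\varphi'w_{\tau\tau}-a_0(\varphi,t)\varphi'\,w=-I_j(t)$, which the constant $w=v_j^-(t):=I_j(t)/(a_0(\varphi,t)\varphi')$ balances (here $a_0(\varphi,t)\varphi'\neq0$ by hypothesis~1 and~\eqref{discontinuity_eq}). I then set $v_j=\theta+\tilde v_j$ with $\theta(t,\tau)=v_j^-(t)\,\eta(\tau)$, where $\eta$ is a smooth cut-off equal to $1$ for $\tau\le-1$ and to $0$ for $\tau\ge0$. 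Using the rapid decay of $v_0,v_{0\tau},v_{0\tau\tau}$ and $\rho\neq0$ from~\eqref{coeff_oper_equat}, a direct check shows that $\tilde\Phi_j:=\Phi_j-L\theta$ decays rapidly at both ends — the constant contributions cancelling at $-\infty$ by the choice of $v_j^-$, while $L\theta\equiv0$ for $\tau\ge0$ — so that $\tilde\Phi_j(t,\cdot)\in\mathcal{S}(\R)$. Theorem~\ref{theo1} then provides a solution $\tilde v_j(t,\cdot)\in\mathcal{S}(\R)$ of $L\tilde v_j=\tilde\Phi_j$ if and only if~\eqref{ort_cond} holds for $\tilde\Phi_j$, equivalently~\eqref{orthog_cond_2}: $\int_{-\infty}^{+\infty}\tilde\Phi_j\,v_{0\tau}\,d\tau=0$; and the resulting $v_j=\theta+\tilde v_j$ then belongs to $\widetilde{G}$.

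It remains to simplify this orthogonality condition to the stated form~\eqref{ort_cond_20}. Since $v_{0\tau}\in\ker L^{*}$ and $v_{0\tau}$ together with its derivatives decays rapidly at both infinities, a double integration by parts gives $\int(L\theta)\,v_{0\tau}\,d\tau=\int\theta\,(L^{*}v_{0\tau})\,d\tau=0$, the boundary terms vanishing even though $\theta$ itself does not decay at $-\infty$. Hence $\int\tilde\Phi_j\,v_{0\tau}\,d\tau=\int\Phi_j\,v_{0\tau}\,d\tau$, and one further integration by parts, using $\Phi_{j\tau}={\cal F}_j$ and the rapid decay of $v_0$ given by~\eqref{v_0_1}, yields $\int\Phi_j\,v_{0\tau}\,d\tau=-\int{\cal F}_j\,v_0\,d\tau$. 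Thus the solvability criterion reduces exactly to~\eqref{ort_cond_20}; reading the construction in reverse (the constant $E_j$ and the limit $v_j^-$ being forced by $v_j\in\widetilde{G}$, as the same asymptotic analysis of~\eqref{ht_1} at $\pm\infty$ shows) gives the ``only if'' direction.

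The Fredholm mechanism itself is supplied by Theorem~\ref{theo1}, so I do not expect it to be the difficulty. The delicate point is rather the bookkeeping that upgrades solvability for each fixed parameter $t$ to genuine membership $v_j\in\widetilde{G}$: one must check that $I_j(t)$, $v_j^-(t)$, the corrector $\theta$, and the solution depend smoothly on $t$ and that all limits as $\tau\to\pm\infty$ are attained uniformly on compact $t$-intervals. I would obtain this from the smooth, uniform dependence on $t$ of $a_0,b_0,\varphi,v_0$ and ${\cal F}_j$, together with the explicit variation-of-parameters representation of $\tilde v_j$ built on the homogeneous solutions $v_{0\tau}$ and $w_0$, which makes the $\tau$- and $t$-behaviour of the solution transparent.
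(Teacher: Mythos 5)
Your proposal is correct and follows essentially the same route as the paper's own proof: integrating~\eqref{sing_part_j2} once to the operator equation $L v_j = \Phi_j$, fixing $E_j(t)$ by the decay requirement at $+\infty$, subtracting a corrector that carries the nonzero limit of $v_j$ at $-\infty$ (your concrete cutoff $\theta = v_j^-(t)\,\eta(\tau)$ plays exactly the role of the paper's $\nu_j(t)\,\eta_j(t,\tau)$ in the decomposition $v_j = \nu_j\eta_j + \psi_j$, with the same value $\nu_j = I_j/(a_0\varphi')$), applying Theorem~\ref{theo1} to the Schwartz-class remainder, and integrating by parts to convert the resulting orthogonality condition into~\eqref{ort_cond_20}. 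The differences are only in the level of detail — you make explicit the boundary-term checks, the ``only if'' direction, and the uniform $t$-dependence that the paper treats implicitly.
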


\begin{proof}
First, we show that the solutions $v_j$ to equation~\eqref{sing_part_j2} can be represented as
\begin{equation}\label{vyglyad}
v_j(t, \tau) = \nu_j(t) \eta_j(t, \tau) + \psi_j(t, \tau),
\end{equation}
where
\begin{equation}\label{nu}
\nu_j(t) = - \frac{1}{ a_0(\varphi(t), t)\, \varphi'(t)} \, \lim_{\tau\to-\infty} \Phi_j(t, \tau),
\end{equation}
\begin{equation}\label{Phi}
\Phi_j(t, \tau) = \int_{-\infty}^{\tau} {\cal F}_j(t, \xi) d\xi + E_j(t),
\end{equation}
$ \eta_j \in {\widetilde G} $ \, and additionally \, $ \lim_{\tau\to -\infty} \eta_j(t, \tau) = 1 $, and
$ \psi_j \in {\widetilde G_0} $, $ j = 1, \ldots, N. $
Here the value $ E_j(t) $ does not depend on the variable $ \tau $ and it can be found from formula~\eqref{Phi} using condition
$$
\lim_{\tau\to +\infty}\Phi_j(t, \tau) = 0.
$$
To prove relation~\eqref{vyglyad} we integrate equation~\eqref{sing_part_j2} in $ \tau$ in limits from $-\infty$ to $\tau $ and we obtain the  operator equation
\begin{equation}\label{ad_eq_1}
L v_j = \Phi_j,  \quad j = 1, \ldots, N,
\end{equation}
where the operator $ L $ is given with formula~\eqref{operator_L}.

By virtue of formulae~\eqref{vyglyad}, \eqref{ad_eq_1},
for all $ t \in [0;T] $ the function $\tau\mapsto \psi_j (t, \tau ) $, $j = 1, \ldots, N $, has to satisfy
the inhomogeneous equation
\begin{equation} \label{ad_eq_2}
L \psi_j = \Phi_j - \nu_j L \eta_j,
\end{equation}
where $ \Phi_j - \nu_j L \eta_j \in {\mathcal{S}(\R)} $, $ j = 1, \ldots, N $.

So, according to Theorem~1 equation~\eqref{ad_eq_2} has a solution in the space $ {\widetilde G_0} $ if and only if the following orthogonality condition
\begin{equation} \label{orth_cond_1}
\int_{-\infty}^{+\infty} \left( \Phi_j - \nu_j L \eta_j \right)  v_{0\tau} d \tau = 0, \quad j = 1, \ldots, N,
\end{equation}
holds.
Finally, from~\eqref{orth_cond_1}, \eqref{Phi}, and~\eqref{ad_eq_1} by integration we deduce condition~\eqref{ort_cond}.
\end{proof}

\begin{remark}
In the case $ j = 1 $ the orthogonality condition~\eqref{ort_cond_20} implies the relation:
\begin{equation} \label{orth_cond_2_j=1}
a_0(\varphi, t) \frac{d}{dt} \left( \frac{A^2}{\alpha} \right) + 6 \, \frac{a_0(\varphi, t)}{b_0(\varphi, t)} \, \frac{A^2}{\alpha} \, a_{0x} (\varphi , t) + \frac{4}{5} \, A^3 \frac{d}{dt} \left(\alpha A^2 \right) - \frac{12}{35} \, \frac{A^4}{\alpha}\, b_{0x}(\varphi, t) = 0,
\end{equation}
where the functions $ \alpha = \alpha (t) $, $ A=A(t) $ are defined with formula~\eqref{functions_A,alpha},
and the function $ \varphi $ is a solution of differential equation~\eqref{discontinuity_eq}.
\end{remark}

Condition~\eqref{orth_cond_2_j=1} implies certains restrictions on the coefficients $ a_0(x,t) $, $ b_0 (x,t) $ of equation~\eqref{CHE_vc} under which its asymptotic soliton-like solutions can be constructed.
Orthogonality condition~\eqref{ort_cond_20} as $ j > 1 $ provides us with similar relations for higher terms of asymptotic expansions for the coefficients of equation~\eqref{CHE_vc}.
In particular cases these relations can be essentially simplified, as in the case of the Korteweg--de Vries equation \cite{Sam_MMC_2019}. For example, if in~\eqref{CHE_vc} we put $ a_0(x, t) = a_0(x) $, $ b_0(x, t) = b_0(x) $, conditions~\eqref{discontinuity_eq}, \eqref{ort_cond_20} are satisfied when the equality
$$
52 \, b_{0}'(\varphi(t)) a_0(\varphi(t)) = 35 \, a_{0}'(\varphi(t)) b_0(\varphi(t))
$$
holds.

\begin{lemma}
\label{Lem2} Let conditions of Lemma 1 and relation~\eqref{ort_cond} be satisfied. Then $ v_j \in {\widetilde G_0}$, $j = 1, \ldots, N, $ if and only if the condition
\begin{equation} \label{ort_cond_2}
\lim_{\tau \to -\infty} \Phi_j(t, \tau ) = 0, \quad j = 1, \ldots, N,
\end{equation}
is true.
\end{lemma}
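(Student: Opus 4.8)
The plan is to read the conclusion directly off the decomposition of $v_j$ that was established in the proof of Lemma~\ref{Lem1}. Under the standing hypotheses (those of Lemma~\ref{Lem1} together with~\eqref{ort_cond}), that proof guarantees that a solution $v_j\in\widetilde G$ exists and admits the representation~\eqref{vyglyad}, namely $v_j(t,\tau)=\nu_j(t)\,\eta_j(t,\tau)+\psi_j(t,\tau)$, where $\eta_j\in\widetilde G$ satisfies $\lim_{\tau\to-\infty}\eta_j(t,\tau)=1$, where $\psi_j\in\widetilde G_0$, and where the scalar coefficient $\nu_j(t)$ is given explicitly by~\eqref{nu}. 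So the entire content of the lemma is already packaged into this splitting, and the task reduces to analysing the left limit of $v_j$.

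First I would compute $\lim_{\tau\to-\infty}v_j$. Since $\psi_j\in\widetilde G_0$ we have $\lim_{\tau\to-\infty}\psi_j(t,\tau)=0$ uniformly in $t\in[0;T]$, while by construction $\lim_{\tau\to-\infty}\eta_j(t,\tau)=1$ uniformly in $t$. Hence
\[
\lim_{\tau\to-\infty}v_j(t,\tau)=\nu_j(t),\qquad t\in[0;T],
\]
the convergence being uniform. Recalling the definition of the subspace $\widetilde G_0\subset\widetilde G$, the inclusion $v_j\in\widetilde G_0$ is equivalent to the vanishing of this left limit, that is, to $\nu_j(t)=0$ for every $t\in[0;T]$.

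It then remains to translate $\nu_j\equiv0$ into a condition on $\Phi_j$. Here I would invoke formula~\eqref{nu}: the prefactor $-\bigl(a_0(\varphi(t),t)\,\varphi'(t)\bigr)^{-1}$ is finite and nonzero for all $t\in[0;T]$, because $a_0(\varphi(t),t)>0$ by the hypothesis of Lemma~\ref{Lem1}, and because $\varphi'(t)=6\,a_0(\varphi(t),t)/b_0(\varphi(t),t)\neq0$ by the discontinuity equation~\eqref{discontinuity_eq} together with the standing assumption $a_0b_0\neq0$. Consequently $\nu_j(t)=0$ holds for all $t$ if and only if $\lim_{\tau\to-\infty}\Phi_j(t,\tau)=0$ for all $t$, which is precisely condition~\eqref{ort_cond_2}. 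Combining the two equivalences gives both implications and completes the proof.

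I do not expect a genuine obstacle here: once the decomposition~\eqref{vyglyad} is in hand, the argument is essentially a one-line limit computation. The only point requiring care is certifying that the prefactor in~\eqref{nu} never degenerates, which the positivity of $a_0$ and the nonvanishing of $b_0$ guarantee, and keeping the limits uniform in $t$, which is inherited from the memberships $\eta_j\in\widetilde G$ and $\psi_j\in\widetilde G_0$ already proved in Lemma~\ref{Lem1}.
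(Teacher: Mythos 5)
Your proposal is correct and follows essentially the same route as the paper: both arguments read the conclusion off the decomposition~\eqref{vyglyad}, identifying $\lim_{\tau\to-\infty}v_j(t,\tau)=\nu_j(t)$ and translating $\nu_j\equiv0$ into condition~\eqref{ort_cond_2} via formula~\eqref{nu} (equivalently, $E_j(t)=0$ in~\eqref{Phi}, which is how the paper phrases it). Your write-up is in fact slightly more complete, since it makes the converse direction and the nondegeneracy of the prefactor $-\bigl(a_0(\varphi(t),t)\,\varphi'(t)\bigr)^{-1}$ explicit, which the paper's sketch leaves implicit.
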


The proof of Lemma~\ref{Lem2} is not complicated and it follows from representation~\eqref{vyglyad}. Indeed, relation~\eqref{ort_cond_2} means that
$ E_j(t) = 0 $, $ j = 1, \ldots, N $, in equation~\eqref{Phi}. This equality with formulas~\eqref{vyglyad}--\eqref{Phi} yields the conclusion $ v_j \in {\widetilde G_0}$.

\begin{remark} In particular case $ j=1 $ relation~\eqref{ort_cond_2} becomes as:
\begin{equation}\label{orth_cond_2}
a_0(\varphi, t) \, \frac{d}{dt} \left(\frac{A}{\alpha} \right) -   \frac{A^2}{\alpha} \, a_{0 x} (\varphi, t) - \frac{8}{45} \, \frac{A^3}{\alpha} \, b_{0x}(\varphi, t) = 0,
\end{equation}
where the functions $ \alpha= \alpha(t) $, $ A = A(t) $, $ t \in [0;T]$, are defined with formula~\eqref{functions_A,alpha} and the function $ \varphi = \varphi(t) $, $ t \in [0;T] $, is a solution of differential equation~\eqref{discontinuity_eq}.
The last formula provides us with an additional condition for the main coefficients of expansions~\eqref{coeff}. It should be compatible with condition~\eqref{orth_cond_2_j=1}.
\end{remark}

\subsection{Constructing terms of singular part of asymptotics}

Finally, the function $ V_j(x, t, \tau) $, $ j= 0, 1, \ldots, N $, is determined outside of the discontinuity curve $ \Gamma $.
At the beginning let us remark that since $ v_0 \in {\widetilde G_0} $ we can put
\begin{equation} \label{prolong_V_0}
V_0(x, t, \tau) = v_0(t, \tau).
\end{equation}

Taking into consideration formulae~\eqref{vyglyad} providing us with their values on $ \Gamma $ we define $ V_j(x, t, \tau) $, $ j = 1, \ldots, N $, through prolongation of $ v_j(t, \tau)$, $ j = 1, \ldots, N $, from curve $ \Gamma $ to its neighborhood in a way depending on properties of the function $ v_j(t, \tau) $, $ j = 1, \ldots, N $.
While prolonging $ v_j(t, \tau) $, $ j = 1, \ldots, N $, it should be considered two cases. Firstly, we suppose condition~\eqref{ort_cond_2} takes place, i.e., $ v_j(t, \tau)\in {\widetilde G_0} $. The case is similar to the one for function $ v_0 (t, \tau)$. It means that the
prolongation of the function $ v_j(t, \tau)$, $ j = 1, \ldots, N $, from $ \Gamma $ to its neighborhood can be written as:
\begin{equation} \label{prolong_V_j_0}
V_j(x, t, \tau) = v_j(t, \tau).
\end{equation}

In opposite case,  when it is not true that condition~\eqref{ort_cond_2}
is satisfied,  we make use of representation~\eqref{vyglyad} and the prolongation is realized as:
\begin{equation} \label{prolong_V_j}
V_j(x, t, \tau) = u_j^- (x, t) \eta_j(t, \tau) + \psi_j(t, \tau),
\end{equation}
where the functions $ \eta_j(t, \tau) $, $ \psi_j(t, \tau) $, $ j = 1, \ldots, N $, are defined via formulae~\eqref{vyglyad}, \eqref{nu} while function $ u_j^- (x, t) $, $ j = 1, \ldots, N $, is a solution to the Cauchy problem
\begin{equation} \label{prolong}
\Lambda u_j^- (x, t) = f_j^- (x, t),
\end{equation}
\begin{equation}\label{prolong_0}
u_j^- (x, t)\bigr|_{\Gamma} = \nu_j(t),
\end{equation}
with differential operator
\begin{equation}\label{operator_Lambda}
\Lambda = a_0(x, t) \frac{\partial}{\partial t},
\end{equation}
where, in particular, the first right side functions in~\eqref{prolong} are written as:
$$
f_1^-(x, t) = 0, \quad f_2^-(x, t) = - a_1(x, t) \frac{\partial
u_1^-}{\partial t}, \quad f_3^-(x, t) = - a_1(x, t) \frac{\partial
u_2^-}{\partial t} - b_0(x, t) {u_1^-}^2\frac{\partial
u_1^-}{\partial x} .
$$

Differential equation~\eqref{prolong} is deduced after substitution of representation~\eqref{prolong_V_j} into equation~\eqref{CHE_vc} and limiting as variable $ \tau $ tends to $ - \infty $. Initial condition~\eqref{prolong_0} follows from representation~\eqref{vyglyad}.

Problem~\eqref{prolong}, \eqref{prolong_0}
has a solution in a $\mu$-neighborhood of the curve $ \Gamma $, i.e., in a domain $ \Omega_\mu(\Gamma) = \left\{(x, t) \in \R \times [0; T]: \left| x - \varphi(t) \right| < \mu \right\} $, $ \mu > 0 $. In fact, a general solution to equation~\eqref{prolong} can be written as:
\begin{equation}\label{prolong3}
u_j^- (x, t) = \int_{0}^t \frac{f_j^-(x, \xi)}{a_0(x, \xi)} \, d \xi + \chi_j (x),
\end{equation}
where the function $ \chi_j (x) $, $ j = 1, \ldots, N $, has to satisfy condition~\eqref{prolong_0}, i.e., the equality
$$
\chi_j (\varphi(t)) = \nu_j(t) - \int_{0}^t \frac{f_j^-(\varphi(t), \xi)}{a_0(\varphi(t), \xi)} \, d \xi
$$
is true.

Due to the assumption $ a_0(x,t) \, b_0(x,t) \not= 0 $ and equation~\eqref{discontinuity_eq}, the function $ t\mapsto \varphi(t) $, $ t \in [0;T] $, is monotonic and has an inverse \cite{Sam_impl_function}.
Thus, we obtain the solution of problem~\eqref{prolong}, \eqref{prolong_0} in exact form as:
$$
u_j^- (x, t) = \nu_j \circ \varphi^{-1} (x) +  \int_{\varphi^{-1} (x)}^t \frac{f_j^-(x, \xi)}{a_0(x, \xi)} \, d \xi .
$$

So, the problem of constructing asymptotic soliton-like solution~\eqref{sol_one-phase} is completely solved.

Let us remark that the singular terms of the asymptotic solutions for equation~\eqref{CHE_vc} are represented with formula~\eqref{sol_one-phase} in two ways depending on condition~\eqref{ort_cond_2}. However, both forms of the asymptotic solutions satisfy the equation
with the same precision. This is confirmed by the two next theorems.

\begin{theorem}
\label{Theo3}
Assume the following conditions:
\begin{enumerate}
\item[1.] The functions $ a_j(x, t) $, $ b_j(x, t) \in C^{\infty} (\R\times [0;T])$, $ j = 0, 1, \ldots, N $, and
$$
a_0(x, t) \, b_0(x, t) \not= 0 \quad \mbox{for all} \quad (x, t) \in \R \times [0;T];
$$
\item[2.] The inequality $ a_0(\varphi(t), t) > 0 $ is fulfilled for all $ t \in [0;T]$, where the function $ \varphi $ satisfies equation~\eqref{discontinuity_eq};
\item[3.] The orthogonality conditions~\eqref{ort_cond_20} are true;
\item[4.] Condition~\eqref{ort_cond_2} holds.
\end{enumerate}
Then the function
\begin{equation}\label{as_sol_parn_11}
u_N (x, t, \varepsilon) =  \sum_{j=0}^N \varepsilon^j  V_j (x, t, \tau), \quad \displaystyle \tau = \frac{x-\varphi(t)}{\varepsilon},
\end{equation}
satisfies equation~\eqref{CHE_vc} on the set $ \R\times [0;T] $ with an asymptotic accuracy
$ O(\varepsilon^N) $ and represents the $N$--th approximation for the asymptotic one-phase soliton-like solution to equation~\eqref{CHE_vc}.
In addition, as $ \tau \to \pm\infty $ the function satisfies~\eqref{CHE_vc} with an asymptotic accuracy $ O(\varepsilon^{N+1}) $.
\end{theorem}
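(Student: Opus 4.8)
The plan is to substitute the truncated expansion \eqref{as_sol_parn_11} into the left-hand side of \eqref{CHE_vc} and to estimate the resulting residual $R_N(x,t,\varepsilon)$ uniformly on compact subsets of $\R\times[0;T]$. First I would carry out the substitution treating $\tau$ as an independent variable, using the chain rule $\partial_t=\partial_t|_\tau-\varepsilon^{-1}\varphi'(t)\,\partial_\tau$ and $\partial_x=\partial_x|_\tau+\varepsilon^{-1}\partial_\tau$. Since each differentiation contributes a factor $\varepsilon^{-1}$ while the dispersive terms of \eqref{CHE_vc} carry the prefactor $\varepsilon^2$, every monomial of the equation evaluated at $u_N$ is at worst of order $\varepsilon^{-1}$; hence $\varepsilon R_N$ stays bounded as $\varepsilon\to0$, and the program reduces to expanding $\varepsilon R_N$ in powers of $\varepsilon$. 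Recall that here the background is zero, so there is no regular part to account for.

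Next I would perform this expansion. It rests on two auxiliary ones: the coefficient expansions \eqref{coeff} of $a$ and $b$, and the Taylor expansion of each $a_k(\varphi(t)+\varepsilon\tau,t)$, $b_k(\varphi(t)+\varepsilon\tau,t)$ about the discontinuity curve, where I use $\varepsilon\tau=x-\varphi(t)$. Writing $\varepsilon R_N=\sum_{k\ge0}\varepsilon^k P_k(x,t,\tau)$ (a finite sum modulo the $O(\varepsilon^{N+1})$ tails of \eqref{coeff}, since $u_N$ is polynomial of degree $N$ in $\varepsilon$ and \eqref{CHE_vc} is polynomial in $u$ and its derivatives), the decisive structural fact is triangularity: the function $V_j$ enters $u_N$ with prefactor $\varepsilon^j$, and after the chain rule and the multiplication by $\varepsilon$ its lowest-order and only linear appearance is at order $\varepsilon^j$. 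Consequently $P_0$ coincides with the left-hand side of \eqref{singular_part_0} restricted to $\Gamma$, that is \eqref{sing_part_02}, while $P_k$ for $1\le k\le N$ coincides with the left-hand side of \eqref{sing_part_j2} for $V_k$, the contributions of $V_0,\dots,V_{k-1}$ being exactly the inhomogeneities $\mathcal F_k$. Because $V_0,\dots,V_N$ were constructed to solve precisely these equations, with the phase $\varphi$ fixed by \eqref{discontinuity_eq} and the orthogonality conditions \eqref{ort_cond_20} guaranteeing solvability, I obtain $P_0=P_1=\dots=P_N=0$.

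It then remains to show that $\varepsilon R_N=\sum_{k\ge N+1}\varepsilon^k P_k+(\text{coefficient tails})$ is $O(\varepsilon^{N+1})$ uniformly on each compact $K\subset\R\times[0;T]$, which gives $R_N=O(\varepsilon^N)$. The hard part will be controlling these remainders, and this is where the functional setting is used decisively. On $K$ the quantity $\varepsilon\tau=x-\varphi(t)$ is bounded, so the Taylor remainders of $a_k,b_k$ are genuinely of the expected order, but they appear multiplied by polynomials in $\tau$ times derivatives of the $V_j$. Here I would invoke $V_0\in G_0$ and $V_j\in G$ (indeed, under condition 4, i.e.\ \eqref{ort_cond_2}, one has $V_j=v_j\in\widetilde G_0$ through \eqref{prolong_V_j_0} and Lemma~\ref{Lem2}), which ensures that every product $\tau^m\,\partial_\tau^r\partial_t^q V_j$ stays bounded on $K$, the rapid decay in $\tau$ beating any polynomial growth. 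The $O(\varepsilon^{N+1})$ tails of $a$ and $b$, multiplied by $\partial_t u_N=O(\varepsilon^{-1})$, contribute at order $\varepsilon^N$ to $R_N$ but only through the decaying factor $\varphi'V_{0\tau}+\cdots$. Collecting these bounds yields $R_N=O(\varepsilon^N)$ on $\R\times[0;T]$.

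Finally, for the sharper accuracy as $\tau\to\pm\infty$, I would note that every surviving contribution to the $\varepsilon^N$-coefficient of $R_N$ carries at least one factor among $V_0,\dots,V_N$ or their $\tau$-derivatives. Under condition 4 all these functions lie in $\widetilde G_0$ and decay rapidly as $\tau\to\pm\infty$, so the $\varepsilon^N$-coefficient tends to zero in that regime and the residual improves to $O(\varepsilon^{N+1})$. The main obstacle throughout is precisely the uniform remainder estimate of the previous paragraph: one must verify that no secular factor $\tau^m$ generated by expanding the coefficients about $\Gamma$ can spoil the order, which is exactly the role of the rapid-decay conditions built into the spaces $G$ and $G_0$.
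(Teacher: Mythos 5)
Your proposal is correct and follows essentially the same route as the paper's own (sketched) proof: substitution of the truncated ansatz into \eqref{CHE_vc}, Taylor expansion of the coefficients about the discontinuity curve $\Gamma$, cancellation of the orders $\varepsilon^0,\dots,\varepsilon^{N}$ of the scaled residual via the hierarchy \eqref{sing_part_02}, \eqref{sing_part_j2}, and control of all remainder terms through the rapid decay of the $V_j$ in $\tau$ (which condition~4, via Lemma~\ref{Lem2}, upgrades to membership in $\widetilde G_0$), with the same decay yielding the improved accuracy $O(\varepsilon^{N+1})$ as $\tau\to\pm\infty$. The paper only outlines these steps, deferring the computations to \cite{Sam_2005}; your write-up supplies the same argument in fuller detail.
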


Proving the Theorem~\ref{Theo3} is not  complicated. It is done in a standard way similarly to proof of analogous statement for the asymptotic one-phase soliton-like solutions to the singularly perturbed KdV equation with variable coefficients described in details in \cite{Sam_2005}. That is why we avoid demonstration of tedious calculations in full here and we present only its basic idea. So, let us move on its description.

We have to get asymptotic estimate for discrepancy of the approximate solution that is given with formula~\eqref{as_sol_parn_11}. To do it firstly we consider the domain $ \Omega_{\mu}(\Gamma) $ and equation~\eqref{CHE_vc} in it after substitution function~\eqref{as_sol_parn_11} into the equation.

The next step is representation of the functions $ a_j(x, t), $ $ b_j(x, t) $, $ j = 0, 1, \ldots, N $, in $ \mu $-neighborhood of the curve $ \Gamma $ by using their
Taylor approximations with the corresponding accuracy.

Finally, equations~\eqref{sing_part_02}, \eqref{sing_part_j2} as well as property of functions $ V_j \in G_0 $, $ j = 0, 1, \ldots, N $, are used for direct deducing the required asymptotic estimations for solution~\eqref{as_sol_parn_11}.

\begin{theorem}
\label{Theo4} Let the following assumptions be supposed:
\begin{enumerate}
\item[1.]
The conditions $1$ -- $3$ of the Theorem 3 are true;
\item[2.] Problem~\eqref{prolong}, \eqref{prolong_0} has a solution in the set
$$
\{(x, t) \in\R \times[0;T]: x - \varphi(t) \le 0 \} .
$$
\end{enumerate}
Then the function
\begin{equation}\label{as_sol_parn}
u_N (x, t, \varepsilon) = \left\{
\begin{array}{cl}
  Y_N^- (x, t, \varepsilon), & (x, t) \in D^- \backslash \Omega_{\mu} (\Gamma), \\
  Y_N (x, t, \varepsilon), & (x, t) \in \Omega_{\mu} (\Gamma), \\
  Y_N^+ (x, t, \varepsilon), & (x, t) \in D^+ \backslash \Omega_{\mu} (\Gamma),
\end{array} \right.
\end{equation}
where
\begin{equation}\label{as_sol_parnY_-}
Y_N^- (x,t, \varepsilon) = \sum_{j=1}^N \varepsilon^j u_j^- (x, t),
\end{equation}
\begin{equation}\label{as_sol_parnY}
Y_N (x, t, \varepsilon) = \sum_{j=0}^N \varepsilon^j V_j (x, t, \tau), \quad \displaystyle \tau =
\frac{x-\varphi(t)}{\varepsilon},
\end{equation}
\begin{equation}\label{as_sol_parnY_+}
Y_N^+ (x, t, \varepsilon) = 0,
\end{equation}
$$
D^+ = \left\{(x, t) \in \R \times [0;T]: x - \varphi(t) > 0 \right\},
$$
$$
D^- = \left\{(x, t) \in \R \times [0;T]: x - \varphi(t) < 0 \right\},
$$
satisfies equation~\eqref{CHE_vc} with an asymptotic accuracy $ O(\varepsilon^N) $ on the set $
\R\times [0;T] $, and it is the $N$--th approximation for the asymptotic one-phase soliton-like solution to equation~\eqref{CHE_vc}.
In addition, as $ \tau \to \pm\infty $ the function satisfies~\eqref{CHE_vc} with an asymptotic accuracy $ O(\varepsilon^{N+1}) $.
\end{theorem}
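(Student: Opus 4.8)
The plan is to reduce the global statement to three local computations — one in each of the three regions appearing in the piecewise definition~\eqref{as_sol_parn} — and then to control the mismatch of the three pieces across the two interface curves $\{x-\varphi(t)=\pm\mu\}$. This is the essential difference with Theorem~\ref{Theo3}: there, condition~\eqref{ort_cond_2} forced every $v_j$ into $\widetilde G_0$, so a single formula~\eqref{as_sol_parn_11} decaying on both sides could be used globally; here we drop that condition, so the singular terms no longer vanish as $\tau\to-\infty$ but tend to the backgrounds $u_j^-$, and the expansion must be continued by the slowly varying profile $Y_N^-$ on the left while remaining zero on the right.

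First I would treat the two outer regions. On $D^+\setminus\Omega_\mu(\Gamma)$ the approximation is $Y_N^+\equiv0$, which is an exact solution of~\eqref{CHE_vc} since every term of the equation carries a factor of $u$ or one of its derivatives, so the residual there is identically zero. On $D^-\setminus\Omega_\mu(\Gamma)$ the approximation $Y_N^-=\sum_{j=1}^N\varepsilon^j u_j^-$ contains no fast variable, so all $\varepsilon^{-1}$ contributions are absent and, upon substituting into~\eqref{CHE_vc} and expanding the coefficients via~\eqref{coeff}, the terms of order $\varepsilon^1,\dots,\varepsilon^N$ cancel exactly by virtue of the prolongation equations~\eqref{prolong}: these were precisely obtained as the $\tau\to-\infty$ limit of the equations for $V_j$, so that $\Lambda u_j^-=f_j^-$ annihilates the corresponding order, leaving a remainder of order $\varepsilon^{N+1}$. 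Condition~$2$ of the theorem guarantees that the $u_j^-$ are defined on the whole region $\{x-\varphi(t)\le0\}$, so this estimate is uniform on compact subsets. Inside $\Omega_\mu(\Gamma)$ the argument is the one already sketched for Theorem~\ref{Theo3}: one inserts $Y_N=\sum_{j=0}^N\varepsilon^j V_j$ into~\eqref{CHE_vc}, replaces the coefficients $a_j,b_j$ by their Taylor expansions about $\Gamma$ to the needed order, and uses that $V_0$, $V_j$ solve~\eqref{sing_part_02}, \eqref{sing_part_j2} together with the prolongation~\eqref{prolong_V_j}; collecting powers of $\varepsilon$ then yields a residual $O(\varepsilon^N)$, improving to $O(\varepsilon^{N+1})$ as $\tau\to\pm\infty$ because the uncancelled terms are built from the rapidly decreasing functions $V_j$.

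The step I expect to be the main obstacle is the matching of the three pieces across the interfaces, which is genuinely new with respect to Theorem~\ref{Theo3}. The point is that the cut is made at a \emph{fixed} distance $\mu$ from $\Gamma$, so that on $\{x-\varphi(t)=\pm\mu\}$ one has $\tau=\pm\mu/\varepsilon\to\pm\infty$ as $\varepsilon\to0$. On the right interface, since $V_0\in G_0$ and $V_j\in G$ decay, together with all their derivatives, faster than any power of $\tau$ as $\tau\to+\infty$, the trace of $Y_N$ and of its derivatives is smaller than any power of $\varepsilon$, so $Y_N$ matches $Y_N^+\equiv0$ beyond all orders. On the left interface the same rapid-decay property applied to $V_j-u_j^-$ as $\tau\to-\infty$ shows that $Y_N$ matches $Y_N^-$ up to $O(\varepsilon^\infty)$. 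Consequently the jumps of $u_N$ and of its derivatives across the two curves, and hence any contribution they could produce in~\eqref{CHE_vc}, are negligible compared with $\varepsilon^{N}$. Combining this with the three interior estimates gives the claimed $O(\varepsilon^N)$ accuracy on all of $\R\times[0;T]$ and the improved $O(\varepsilon^{N+1})$ accuracy in the wings $\tau\to\pm\infty$; the remaining verifications are the same lengthy but routine Taylor-expansion bookkeeping already invoked for Theorem~\ref{Theo3}, which I would omit for brevity.
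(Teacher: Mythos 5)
Your proposal is correct and follows essentially the same route as the paper's proof: the paper likewise argues region by region, observing that on $\Omega_\mu(\Gamma)$ and $D^+\setminus\Omega_\mu(\Gamma)$ the function coincides (up to negligible terms) with the expansion~\eqref{as_sol_parn_11} of Theorem~\ref{Theo3}, while on $D^-\setminus\Omega_\mu(\Gamma)$ the residual is $O(\varepsilon^{N+1})$ because the $u_j^-$ solve the prolongation problem~\eqref{prolong}, \eqref{prolong_0}. Your explicit treatment of the matching across the curves $\{x-\varphi(t)=\pm\mu\}$, using the faster-than-any-power decay of $V_j$ (respectively of $V_j-u_j^-$) in $\tau$, is a detail the paper leaves implicit, but it is consistent with, and only sharpens, the paper's argument.
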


\begin{proof} Function~\eqref{as_sol_parn} in Theorem~\ref{Theo4} gives another form of the asymptotic one-phase soliton-like solution to equation~\eqref{CHE_vc}. Nevertheless, the proof of the theorem is similar to proving Theorem~\ref{Theo3}. It means that we obtain an asymptotic estimate for discrepancy for approximate solution~\eqref{as_sol_parn} with respect to equation~\eqref{CHE_vc}.

In the domains $ \Omega_\mu(\Gamma) $, $ D^+ \backslash \Omega_\mu(\Gamma) $ function~\eqref{as_sol_parn} is the same as~\eqref{as_sol_parn_11} through composition, in fact. Thus, the proof of Theorem~\ref{Theo4} is analogous to proof of Theorem~\ref{Theo3} for the case of these domains. It means that it is enough to consider set $ D^- \backslash \Omega_\mu(\Gamma) $.
In the given set function~\eqref{as_sol_parn} differs from~\eqref{as_sol_parn_11} in expression
$$
Y_N^{-} (x, t, \varepsilon) = \sum_{j=1}^N \varepsilon^j u_j^- (x, t).
$$

The last function equates relation~\eqref{CHE_vc} with asymptotic precise $ O(\varepsilon^{N+1}) $ due to its construction.

As a result, on the set $ \R\times [0;T] $ function (\ref{as_sol_parn}) satisfies equation~\eqref{CHE_vc} with an asymptotic accuracy $ O(\varepsilon^N) $. Due to properties of functions forming the singular part of the constructed asymptotic solution, it satisfies equation~\eqref{CHE_vc} with an asymptotic accuracy $ O(\varepsilon^{N+1}) $ as $ \tau \to \pm \infty $.

Theorem~\ref{Theo4} is proved.
\end{proof}

\subsection{Example} Consider the vcmCH equation with singular perturbation of the form:
\begin{equation}\label{example_3_s}
[1 + \varepsilon(x^2 + 1)] u_t - \varepsilon^2 u_{xxt} +  u^2 u_x - 2 \varepsilon^2 u_x u_{xx} - \varepsilon^2 uu_{xxx} = 0.
\end{equation}

So, we have
$$
a_0(x,t) = b_0(x, t) = 1, \quad
a_1(x,t) = x^2 + 1, \quad b_1(x,t) = 0,
$$
$$
a_2(x, t) = b_2(x,t) = a_3(x, t) = b_3(x, t) = \ldots = 0.
$$

The phase function $ \varphi = \varphi(t) $ is a solution to equation~\eqref{discontinuity_eq} of the form:
$$
\frac{d \varphi}{d t} = 6,
$$
and under trivial initial condition it is given as $\varphi(t) = 6 t $, $ t \in \R $.

The main term of the asymptotic soliton-like solution is presented as:
\begin{equation}\label{example_sol_main-term_peakon_1}
V_0(x, t, \tau) = v_0(t, \tau) = - 6 \cosh^{-2} \left(\frac{\tau}{2} + C_0 \right),
\end{equation}
where $ C_0 $ is an arbitrary real and
$$
\tau = \frac{x - 6 t}{\varepsilon}.
$$

\begin{figure}[h]
\centering
\includegraphics[scale=0.7]{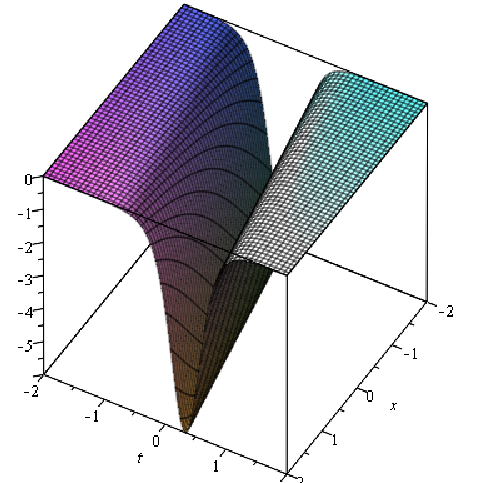}\qquad
\includegraphics[scale=0.7]{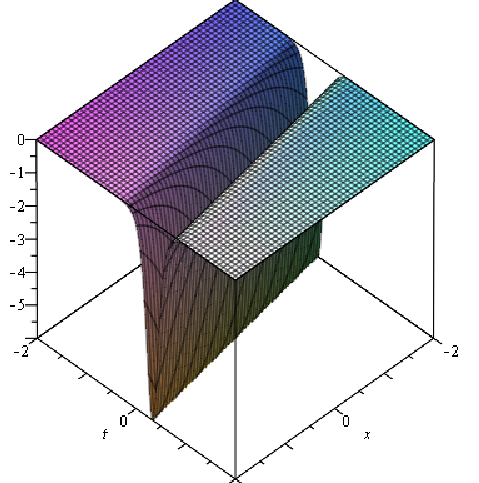}
\caption{ The main term of the asymptotic soliton-like solution $ V_0 (x, t,
\tau)$, $ \tau = (x - 6t) / \varepsilon $, as $\varepsilon=1$ (at the left) and
$\varepsilon=0.5$ (at the right).} \label{fig:1}
\end{figure}

\begin{figure}[h]
\centering
\includegraphics[scale=0.7]{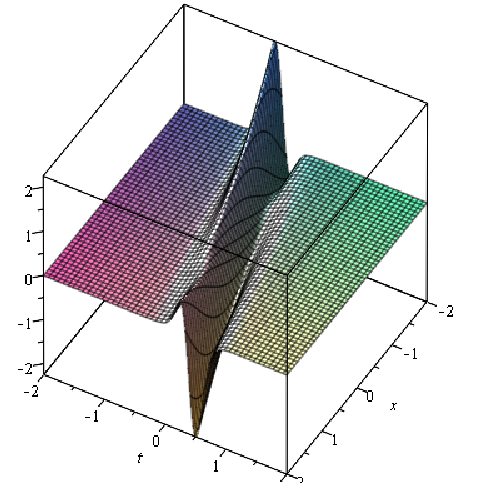}\qquad
\includegraphics[scale=0.7]{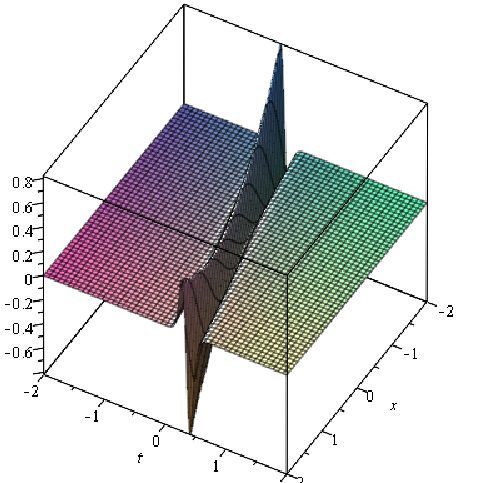}
\caption{ The term $ \varepsilon V_1 (x, t, \tau)$, $ \tau = (x - 6t) / \varepsilon $, as $\varepsilon=1$ (at the left) and
$\varepsilon=0.5$ (at the right).} \label{fig:2}
\end{figure}

\begin{figure}[h]
\centering
\includegraphics[scale=0.7]{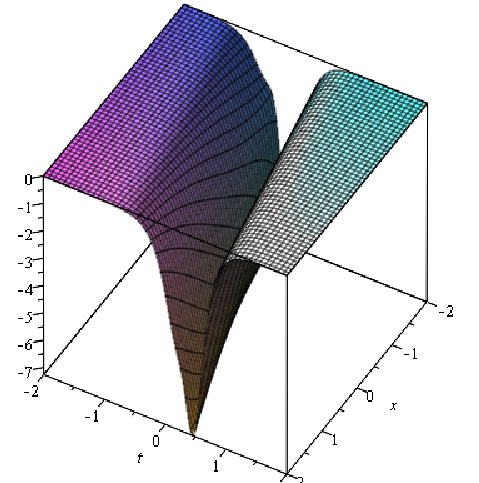}\qquad
\includegraphics[scale=0.7]{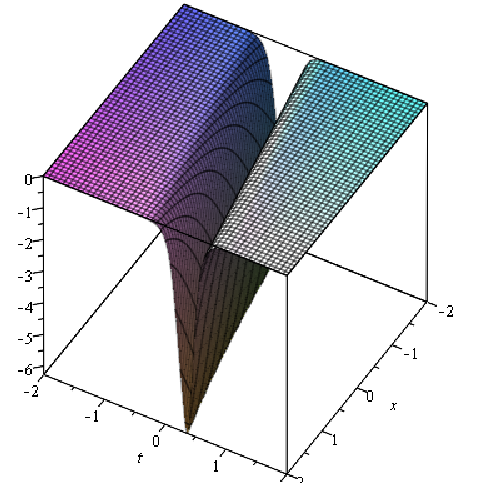}
\caption{ The first asymptotic approximation  for soliton-like solution $ u_1 (x, t,
\varepsilon) $ as $\varepsilon=1$ (at the left) and
$\varepsilon=0.5$ (at the right).} \label{fig:3}
\end{figure}

Let us put $ C_0 = 0 $ and move on to definition of the first singular term of the asymptotic soliton-like solution. In this case we find
$$
\Phi_1 (t, \tau) = - 36(36 t^2 +1) \cosh^{-2} { \frac{\tau}{2}} ,
$$
and
\begin{equation}\label{example_sol_first-term_peakon}
\begin{split}
& v_1(t, \tau) = (1 + 36 t^2) \cosh^{-6} \frac{\tau}{2} \, \tanh \frac{\tau}{2}\\
& \hskip1cm
\times \biggl[ \left( 11270 + 216 t^2 \right) \tau + \left( 4 103,5 + 270 t^2 \right) \sinh \tau  + \left( 513,125 + 40,5 t^2 \right) \sinh 2 \tau \\
&\hskip2cm
+ 2 \tau \cosh 2\tau - \sqrt{2}\left( 507,5 - 162 t^2 \right) {\rm{arctanh}}\left(\frac{1}{\sqrt{2}}\tanh \frac{\tau}{2} \right)\\
&\hskip2cm
- \left. \frac{3 \sqrt{2} }{2} \, \sinh^2 \frac{\tau}{2} \, \cosh \tau \, \, {\rm arctanh} \left( \frac{1}{\sqrt{2}} \tanh \frac{\tau}{2} \right)  -
8192  \coth \frac{\tau}{2} \right].
\end{split}
\end{equation}

It is clear that condition~\eqref{ort_cond_2} is true and the function $ v_1 \in {\widetilde G_0} $. So, we can put $ V_1(x, t, \tau) = v_1(t, \tau)$. The first asymptotic approximation for soliton-like solution to equation~\eqref{example_3_s} is global and it is given as:
\begin{equation} \label{as_sol_sol_example}
u_1(x,t,\varepsilon) = v_0(t, \tau) + \varepsilon v_1(t, \tau), \quad \tau = \frac{x - 6t}{\varepsilon},
\end{equation}
where the functions $ v_0(t, \tau) $, $ v_1(t, \tau) $ are defined with \eqref{example_sol_main-term_peakon_1}, \eqref{example_sol_first-term_peakon}.

Accordingly Theorem 3 function~\eqref{as_sol_sol_example} satisfies equation~\eqref{example_3_s} with an asymptotic accuracy $ O(\varepsilon) $. In addition, as $ \tau \to \pm\infty $ this function satisfies \eqref{example_3_s} with an asymptotic accuracy $ O(\varepsilon^{2}) $.

Graphs of the main and first terms of the asymptotic soliton-like solution as well as of the first approximation are presented on Fig.~\ref{fig:1}--\ref{fig:3} for a small parameter $ \varepsilon = 1 $ and $ \varepsilon = 0.5 $.

\section{The asymptotic peakon-like solutions}

Remind that the mCH equation~\eqref{CHE_cons_mod} is well known for its soliton and peakon solutions, examples of which are given with formulae~\eqref{peakon-sol_mCH} and \eqref{soliton-sol_mCH}.
Both soliton and peakon solutions represent solitary wave solutions that are rapidly decreasing to a background function at infinity.
It should be also remarked that soliton and peakon solutions differ in differentiability properties.
In particular, whereas soliton solutions are described by the functions that necessarily have inflection points, the peakon solutions are represented by functions that, like their derivatives, are monotone on any interval of their smoothness. It can be easily noticed for the peakon solution~\eqref{peakon_sol_CH} of the CH equation.

Thus, the problem of constructing asymptotic peakon-like solutions of the vcmCH equation with singular perturbation~\eqref{CHE_vc} is natural, since this type of solutions has different properties than asymptotic soliton-like solutions, and they give a new type of the asymptotic solutions.

Due to the fact that the singular part of an asymptotic soliton-like solution reflects specific features of a soliton-like solution, here as above we suppose the regular part of the asymptotics to be zero.
As for the case of asymptotic soliton-like solutions, we propose definitions of suitable functional spaces that are modifications of the spaces $ G $ and $ G_0 $ introduced above in Section 2. For the problem under consideration we take into account form of peakon-solutions that have a peak at a point and as a result they possess a discontinuous first derivative at this point.

Now we move on the definitions that are used in the sequel.

Let $ G^+ = G^{+}(\R \times [0;T] \times \R_+) $ be the space of infinitely differentiable functions $ f \colon
\R \times [0;T] \times \R_+ \to \R $, such that for any non-negative integers $ n $, $ p $, $ q $ and $ r $
$$
\lim_{\tau \to + \infty} \tau^n \frac{\partial\,^p}{\partial x^p} \, \frac{\partial \, ^q}{\partial
\, t^q} \, \frac{\partial \,^r}{\partial \tau^r} \, f (x, t, \tau) = 0, \quad (x, t) \in K,
$$
uniformly with respect to $ (x, t) \in K $, in any compact set $ K \subset \R \times [0;T] $. Here $ \R_+ = [0; +\infty) $.

Let $ {\widetilde G^+} = {\widetilde G^+}([0;T] \times \R_+) $ be the space of infinitely differentiable functions $ f \colon
[0;T] \times \R_+ \to \R $, such that for any non-negative integers $ n $, $ p $ and $ q $
$$
\lim_{\tau \to + \infty} \tau^n \frac{\partial\,^p}{\partial t^p} \, \frac{\partial \, ^q}{\partial
\, \tau^q} \, f (t, \tau) = 0, \quad t \in [0; T].
$$

We denote by $ G^{-} = G^{-}(\R \times [0;T] \times \R_-) $ the space of infinitely differentiable functions $ f \colon
\R \times [0;T] \times \R_- \to \R $, such that for any non-negative integers $ n $, $ p $, $ q $ and $ r $
$$
\lim_{\tau \to - \infty} \tau^n \frac{\partial\,^p}{\partial x^p} \, \frac{\partial \, ^q}{\partial
\, t^q} \, \frac{\partial \,^r}{\partial \tau^r} \, f (x, t, \tau) = 0, \quad (x, t) \in K,
$$
uniformly with respect to $ (x, t) \in K $, in any compact set $ K \subset \R \times [0;T] $. Here $ \R_- = ( -\infty; 0] $.

Let $ {\widetilde G}^{-} = {\widetilde G}^{-}([0;T] \times \R_-) $ be the space of infinitely differentiable functions $ f \colon
[0;T] \times \R_- \to \R $, such that for any non-negative integers $ n $, $ p $ and $ q $
$$
\lim_{\tau \to - \infty} \tau^n \, \frac{\partial \, ^p}{\partial
\, t^p} \, \frac{\partial \,^q}{\partial \tau^q} \, f ( t, \tau) = 0, \quad t \in [0; T].
$$

Let $ G^{\pm} = G^{\pm}(\R \times [0;T] \times \R) $ be the space of continuous functions $ f \colon
\R \times [0;T] \times \R \to \R $, such that the function $ f $ can be written as:
$$
f = f(x, t, \tau) = \left\{ \begin{array}{cc}
                              f^+(x, t, \tau), & (x,t,\tau) \in \R \times [0;T] \times \R_+, \\
                              f^-(x, t, \tau), & (x,t,\tau) \in \R \times [0;T] \times \R_-,
                            \end{array} \right.
$$
where $ f^+ \in G^+ $ and $ f^- \in G^- $.

We denote by $ {\widetilde G^{\pm}} = {\widetilde G^{\pm}}([0;T] \times \R) $  the space of continuous functions $ f \colon
[0;T] \times \R \to \R $, such that the function $ f $ can be written as:
$$
f = f(t, \tau) = \left\{ \begin{array}{cc}
                              f^+(t, \tau), & (t,\tau) \in  [0;T] \times \R_+, \\
                              f^-(t, \tau), & (t,\tau) \in  [0;T] \times \R_-,
                            \end{array} \right.
$$
where $ f^+ \in {\widetilde G^+} $ and $ f^- \in {\widetilde G^-} $.

Below we use the definition of asymptotic peakon-like function.

\begin{definition} \label{Definition 2}
A nontrivial function  $ u = u(x, t, \varepsilon) $, where $ (x, t) \in \R \times [0; T] $ and $ \varepsilon $ is a small parameter, is called an asymptotic peakon-like function if for any integer $ N \ge 0 $ it can be represented as:
\begin{equation}\label{2as_sol_peakon}
u(x, t, \varepsilon) = \sum_{j=0}^N \varepsilon^j V_j (x, t, \tau) + O(\varepsilon^{N+1}), \quad \tau = \frac{x - \varphi(t)}{\varepsilon},
\end{equation}
where $ \varphi(t) \in C^{\infty} ([0;T]) $ is a scalar function, and \, $ V_j \in G^{\pm} $, for $ j = 0\, , 1, \ldots, N $.
\end{definition}

As in the case of asymptotic soliton-like solutions the function $ x - \varphi(t) $ is called {\it a phase function} of the asymptotic peakon-like function $ u(x, t, \varepsilon). $

A curve determined by equation $ x - \varphi(t) = 0 $ is called {\it a discontinuity curve} for the function $ u(x, t, \varepsilon) $ \cite{Sam_2005, Sam_JMP}.

\section{Constructing asymptotic peakon-like solutions}

Let us move on to description of an algorithm of constructing asymptotic peakon-like solutions to equation~\eqref{CHE_vc}. The main idea is similar to the one of constructing  asymptotic soliton-like solutions of equation~\eqref{CHE_vc} given above. It is lightly modified in accordance with the searched asymptotic solutions of form~\eqref{2as_sol_peakon}.

The principal problem is to find terms in expansion~\eqref{2as_sol_peakon}. To solve it, we substitute the ansatz for asymptotic solutions into equation~\eqref{CHE_vc}, we get differential equations for the singular terms of series~\eqref{2as_sol_peakon} and study these equations in a neighborhood of the discontinuity curve.

The next step is related to solving the obtained differential equations for the functions $ v_j (t, \tau) = V_j(x, t, \tau) \bigr|_{x=\varphi (t)} $, $ j = 0, 1, \ldots, N $, in the functional spaces according to Definition~\ref{Definition 2}.

Finally, to obtain the coefficients of expansion~\eqref{2as_sol_peakon}, we prolong the functions $ v_j (t, \tau) $, $ j = 0, 1, \ldots, N $, analogously to the procedure of prolonging the terms of asymptotic soliton-like solutions in the case $ v_j \in {\widetilde G}_0 $, $ j =1, \ldots, N $, see Section 3.5.

\subsection{The main term}

Let us find the main term $ V_0(x, t, \tau) $ of the asymptotic peakon-like solution of equation~\eqref{CHE_vc}. It can be obtained explicitly despite tedious calculations.
Indeed, the function $ v_0(t, \tau) = V_0(x, t, \tau) \bigr|_{x=\varphi (t)} $ is a solution to the second-order differential equation of form~\eqref{sing_part_0_12}.
Let denote
$$
C_1(t) = - g = - g (t), \quad t \in [0;T],
$$
and rewrite differential equation~\eqref{sing_part_0_12} as a system:
\begin{equation} \label{system_int_1}
y = \frac{d v_0}{d \tau},
\end{equation}
\begin{equation} \label{system_int_2}
\rho_1 (t, \tau) \, \frac{d y}{d \tau} - a_0(\varphi, t) \, \varphi' \, v_0 + \frac{1}{3} b_0(\varphi, t) v_0^3 - \frac{1}{2} y^2 = - g,
\end{equation}
where $ \rho_1 (t, \tau) = \varphi'(t) - v_0(t, \tau) $.

Now we find the first integral of system~\eqref{system_int_1}, \eqref{system_int_2}.
This can be done as follows. Equation~\eqref{system_int_2} implies the equation
$$
\frac{dy}{d\tau} = \frac{6 a_0(\varphi, t) \, \varphi'\, v_0 - 2 b_0(\varphi, t) v_0^3 + 3 y^2 - 6 g}{6 \rho_{1}(t, \tau) },
$$
which with~\eqref{system_int_1} results in the total differential equation
\begin{equation} \label{system_int_3}
\left[  6 \, a_0(\varphi, t) \, \varphi' \, v_0 - 2 b_0(\varphi, t) v_0^3 + 3 y^2 - 6 g \right] d v_0 - 6  \rho_{1} (t, \tau) y d y = 0.
\end{equation}

In a standard way \cite{Smith}, we calculate the first integral of equation~\eqref{system_int_3} that is given as:
\begin{equation} \label{system_int_4}
H(v_0, y) = 6 \, a_0(\varphi, t) \, \varphi' \, v_0^2 - b_0(\varphi, t) v_0^4 + 6 y^2 v_0 - 12 g v_0 - 6 \, \varphi' \, y^2 ,
\end{equation}
where $ \varphi = \varphi (t) $, and $ t \in [0;T] $ plays the role of a parameter here.

Equating the function $ H(v_0, y) $ to a constant provides us with an ODE for the function $ v_0(t, \tau) $. Remind that we are interested in a particular solution of equation~\eqref{sing_part_0_12} and therefore we can choose a constant for the first integral in a suitable form. Below we apply the idea that was previously used for searching asymptotic soliton-like solution of the singularly perturbed vcKdV equation \cite{Sam_2005}.

Let us represent the relation $ H(v_0, y) = C $ as:
\begin{equation} \label{system_int_6}
y^2 = Q(v_0),
\end{equation}
where
$$
Q(v_0) =  \frac{ b_0(\varphi, t) v_0^4 - 6 a_0(\varphi, t)\, \varphi'\, v_0^2 + 12 g v_0 + C }{6 \, (v_0 - \varphi')} .
$$

If we take
$$
C = - b_0(\varphi, t) \left(\varphi'\right)^4 + 6 a_0(\varphi, t) \left(\varphi'\right)^3  - 12 g \varphi' ,
$$
then the function $ Q(v_0) $ becomes the cubic polynomial. We can write it as:
$$
Q(v_0) =  \frac{1}{6} \, b_0(\varphi, t) (v_0 - \alpha_1)^2 (v_0 - \alpha_2)
$$
under the assumptions that the values $ \alpha_1 $ and $ \alpha_2 $ satisfy the system
\begin{equation} \label{eq_1_main_term}
2 \alpha_1 + \alpha_2 = - \varphi' ,
\end{equation}
\begin{equation} \label{eq_2_main_term}
\alpha_1^2 + 2 \alpha_1 \alpha_2 = \frac{1}{b_0(\varphi, t)} \left[ b_0(\varphi, t) \left( \varphi' \right)^2 - 6 a_0 (\varphi, t) \varphi' \right] ,
\end{equation}
\begin{equation} \label{eq_3_main_term}
\alpha_1^2 \alpha_2 = \frac{1}{b_0(\varphi, t)} \left[ - b_0(\varphi, t) \left( \varphi' \right)^3 + 6 a_0(\varphi, t) \left( \varphi' \right)^2 - 12 g \right] .
\end{equation}

System~\eqref{eq_1_main_term}--\eqref{eq_3_main_term} has the particular solution
$$
\alpha_1 = 0, \quad  \alpha_2 = -\varphi' .
$$
It implies the differential equation
\begin{equation}\label{disc_curve_peakon}
\frac{d \varphi}{d t} = 6 \, \frac{a_0(\varphi,t)}{b_0(\varphi,t)},
\end{equation}
for the function $ \varphi $, and the relation
$$
\frac{6 a_0(\varphi, t) - b_0(\varphi, t) \varphi'}{12} \, \left( \varphi' \right)^2 = 0.
$$

So, the function $ v_0(t, \tau ) $
satisfies the first-order ODE
\begin{equation}\label{eq_v_0}
\left(\frac{d \, v_0}{d \, \tau}\right)^2 = \frac{1}{6} \, b_0(\varphi, t) \, (v_0 + \varphi')\,  v_0^2,
\end{equation}
which, under condition $ b_0(\varphi(t), t) > 0 $, $ t \in [0;T] $, gives
\begin{equation}\label{eq_v_0_1}
\frac{d \, v_0}{d \, \tau} = \pm\frac{v_0}{\sqrt{6}} \ \sqrt{\, b_0(\varphi, t) (v_0 + \varphi')}.
\end{equation}

Formula~\eqref{eq_v_0_1} implies the equality
$$
\int_{\varphi'}^{v_0} \frac{d \, v_0}{v_0 \ \sqrt{v_0 + \varphi'}}  = \pm \int_{0}^{\tau} \sqrt{\frac{1}{6} \, b_0(\varphi, t)} \, d \tau,
$$
which is equivalent to
$$
{\textmd{arccoth}}\, \frac{\sqrt{v_0 + \varphi'}}{\sqrt{\varphi'}} -{\textmd{arccoth}} \,  \sqrt{2}  = \pm \sqrt{\frac{1}{6} \ b_0(\varphi, t) \varphi'} \, \,\frac{\tau}{2} .
$$

Thus, taking into account equation~\eqref{disc_curve_peakon} for the function $ \varphi $, we can represent the function $ v_0(t, \tau) $ as
\begin{equation}\label{sol_main-term_peakon}
v_0(t, \tau) = 6 \, \frac{a_0(\varphi, t)}{b_0(\varphi, t)} \, \sinh^{-2} \left( \sqrt{a_0(\varphi, t)} \, \, \frac{ |\tau|}{2}  \, +
{\textmd{arccoth}} \, \sqrt{2}  \right) .
\end{equation}

Since $ v_0(t, \tau ) $ is a rapidly decreasing function as $ | \tau | \to \infty $, the main term $ V_0 (x,t, \tau ) $ of the asymptotic peakon-like solution is written as
$ V_0 (x,t, \tau ) = v_0(t, \tau ) $. It completes searching the main term of asymptotic peakon-like solution~\eqref{2as_sol_peakon}.

\begin{remark} Equation~\eqref{disc_curve_peakon} is a differential equation for the phase function $ \varphi = \varphi(t) $, the initial condition for which can be taken as $ \varphi(0) = 0 $. It coincides with the differential equation for the discontinuity curve~\eqref{discontinuity_eq} deduced  for the asymptotic soliton-like solutions in Section 3.1.
\end{remark}

\begin{remark} The right-hand side of~\eqref{sol_main-term_peakon} is a peakon-like function. If $ a(x, t, \varepsilon) = a_0(x,t) =1 $ and $ b(x, t, \varepsilon) = b_0 (x, t) = 3 $, then we have $ \varphi(t) = 2 t $, $ \displaystyle \tau = \frac{x - 2 t}{\varepsilon} $, and the obtained main term~\eqref{sol_main-term_peakon} of asymptotic peakon-like solution~\eqref{2as_sol_peakon} completely coincides with the exact solution of the modified Camassa--Holm equation~\eqref{CHE_cons_mod} that is given by \eqref{peakon-sol_mCH}.
\end{remark}

\subsection{Higher terms of the asymptotics}

The higher terms of asymptotic peakon-like solution~\eqref{2as_sol_peakon} are determined from partial differential equations of form~\eqref{sing_part_j2}. The procedure of solving is based on the idea of constructing the functions in the neighborhood of the discontinuity curve. In comparison with asymptotic soliton-like solutions, we consider here the two cases $ \tau \ge 0 $ and $ \tau <0 $. The found functions are prolonged in such a way that they are continuous and belong to the space $ G^{\pm} $.
Thus, consider equations~\eqref{sing_part_j2} for the functions $ v_j \in {\widetilde G^{\pm}} $, $ j = 1, \ldots, N $. Below we use notation
\begin{equation}\label{sing_peakon_ex}
v_j(t, \tau) = \left\{
     \begin{array}{cc}
      v_j^+(t, \tau), & \tau \ge 0, \\
      v_j^-(t, \tau), & \tau < 0,
     \end{array}
 \right.
\end{equation}
under the assumption
$$
v_j^+(t, 0) = \lim_{\tau \to 0-}v_j^-(t, \tau).
$$

Taking into account the exact formula~\eqref{sol_main-term_peakon} for the main term of the asymptotic peakon-like solution analogously~\eqref{ht_1} we represent equations for the higher terms as:
\begin{equation} \label{higher_term_+}
(\varphi' - v_0)v_{j \tau\tau}^{+} - v_{0\tau} v_{j \tau}^{+} + \left(b_0(\varphi, t) v_0^2 - a_0 (\varphi, t) \varphi' - v_{0 \tau \tau} \right) v_j^{+} = \Phi_j^+ (t, \tau), \quad \tau \ge 0,
\end{equation}
\begin{equation} \label{higher_term_-}
(\varphi' - v_0) v_{j\tau\tau}^{- }- v_{0 \tau} v_{j \tau}^{- } + \left(b_0 (\varphi, t) v_0^2 - a_0 (\varphi, t) \varphi' - v_{0 \tau\tau} \right)v_j^{-} = \Phi_j^- (t, \tau), \quad \tau <  0,
\end{equation}
where
\begin{equation}\label{sing_part_peakon_F>0}
\Phi_j^+ (t, \tau) = \int_{0}^\tau {\cal F}_j(t, \tau) \, d \tau + E_j^+(t), \quad \tau \ge 0,
\end{equation}
\begin{equation}\label{sing_part_peakon_F<0}
\Phi_j^- (t, \tau) = \int_{0}^\tau {\cal F}_j(t, \tau) \, d \tau + E_j^-(t), \quad \tau < 0,
\end{equation}
and $ {\cal F}_j(t, \tau) $, $ j = 1, \ldots, N $, is the right-side function in equation~\eqref{sing_part_j2}.

Here the values $ E_j^+(t) $, $ E_j^-(t) $ are constants of integrations and they are chosen in such a way that
$$
\lim_{\tau \to +\infty} \Phi_j^+ (t, \tau) = 0,
$$
$$
\lim_{\tau \to -\infty} \Phi_j^- (t, \tau) = 0.
$$

In particular case $ j = 1 $ we have

\begin{equation}\label{function_Phi_1}
\begin{split}
& \Phi_1^{+} (t, \tau) =  B_1 [\coth\kappa_+ - 1] + \left[B_2 \tau + B_3 \right]\sinh^{-2} \kappa_+ \\[2mm]
&\hskip2cm
+\left[ B_4  \cosh \kappa_+ +  B_5 \tau \sinh^{-1} \kappa_+ \right] \sinh^{-3} \kappa_+ \\[2mm]
&\hskip2cm
+ \left[\left( B_6 \tau + B_7\right) \sinh^{-1} \kappa_+  +  B_8  \cosh \kappa_+ \right] \sinh^{-5} \kappa_+,
\end{split}
\end{equation}
and
\begin{equation}\label{function_Phi_1_-}
\begin{split}
& \Phi_1^{-} (t, \tau) =  - B_1 [\coth \kappa_- - 1 ] + \left[B_2 \tau + B_3 \right]\sinh^{-2} \kappa_- +\\[2mm]
&\hskip2cm
+\left[- B_4  \cosh \kappa_- + B_5 \tau \sinh^{-1} \kappa_- \right] \sinh^{-3} \kappa_- \\[2mm]
&\hskip2cm
+ \left[\left(B_6 \tau +  B_7\right) \sinh^{-1} \kappa_-  -  B_8 \cosh \kappa_- \right] \sinh^{-5} \kappa_-,
\end{split}
\end{equation}
where
\[
\begin{split}
&\kappa_+ = \alpha \tau + {\textmd{arccoth}}  \, \sqrt{2} , \quad \kappa_- = - \alpha \tau + {\textmd{arccoth}} \, \sqrt{2}, \\[2mm]
& B_1 = a_0(\varphi, t) \frac{d}{dt} \left(\frac{B}{\alpha} \right) + \frac{B}{\alpha} \, \varphi' \, a_{0 x} (\varphi, t) - \frac{8}{45} \frac{B^3}{\alpha} b_{0x}(\varphi, t),\\[2mm]
& B_2 = - a_0 (\varphi, t) \frac{B}{\alpha} \,  \alpha_t + 4 B \alpha \alpha_t  + \frac{1}{6} \frac{B}{b_0(\varphi, t)} \, a_{0x}(\varphi, t)  ,\\[2mm]
& B_3 = 6 \frac{a_0(\varphi, t)}{b_0(\varphi, t)} \, a_1(\varphi, t) \, B ,
\quad B_4 =  -2 \frac{d}{dt} ( \alpha B) + \frac{4}{45} \frac{B^3}{\alpha} \, b_{0x} (\varphi,t), \\ 
&
B_5 = 6 B \alpha \alpha_t , \quad
B_6 = - \frac{1}{3} B^3 b_{0x}(\varphi, t), \quad
B_7 =  - \frac{1}{3} b_1 (\varphi, t) B^3, \quad
B_8 = - \frac{ B^3}{\alpha} \, b_{0x}(\varphi, t),
\end{split}
\]
and
\[
\alpha = \alpha (t) = \frac{ \sqrt{ a_0(\varphi, t)}}{2}, \quad B = B(t) = 6 \, \frac{a_0(\varphi, t)}{b_0(\varphi, t)}, \quad \varphi = \varphi (t), \quad t \in [0;T].
\]

Assumption $ v_j \in {\widetilde G^{\pm}} $ implies inclusions $ \Phi_j^{+} \in {\widetilde G^{+}} $, $ \Phi_j^{-} \in {\widetilde G^{-}} $.
In particular case $j = 1$ it provides us with necessary condition of belonging the function $ v_1^{+} $ to the space $ {\widetilde G^{+}} $ and $ v_1^{-} $ to the space $ {\widetilde G^{-}} $ as:
\begin{equation}\label{cond_1}
a_0(\varphi, t) \frac{d}{dt} \left(\frac{B}{\alpha} \right) + \frac{B^2}{\alpha} \, a_{0 x} (\varphi, t) - \frac{8}{45} \frac{B^3}{\alpha}b_{0x}(\varphi, t) = 0
\end{equation}
that are similar to condition~\eqref{orth_cond_2}.

Now let us move on to analyzing equations~\eqref{higher_term_+}, \eqref{higher_term_-}. General solutions of these equations can be found by means of the method of variation of constant with using a solution of the correspondent homogeneous equations. Because the function $ w = w(t, \tau) = v_{0 \tau} (t, \tau) $ is a solution of the homogeneous equation for both relations~\eqref{higher_term_+}, \eqref{higher_term_-}, another solution of the both homogeneous equations can be found by Abel's formula
\begin{equation} \label{2fund}
w_{0}(t, \tau) = v_{0\tau}(t, \tau) \int_{\tau_0}^{\tau} \frac{d\, \xi}{\rho_1(t, \xi) v_{0\xi}^2(t, \xi)}, \quad \tau_0 \in [-\infty; \infty).
\end{equation}

Using formula~\eqref{sol_main-term_peakon} we find
\[
\begin{split}
& w_{0}(t, \tau) = \frac{1}{9} \, \frac{b_{0}^2(\varphi , t) }{a_{0}^3(\varphi , t) } \, \left[ -\frac{35}{32} \sqrt{a_{0}(\varphi , t)} |\tau | \sinh^{-2}{\kappa} \coth {\kappa} \right. \\[1mm]
& \hskip2cm \left.
+ \frac{5}{8} \coth^{2}{\kappa} - \frac{1}{4} \cosh^2{\kappa} + \frac{5}{3} \, \sinh^{-2}{\kappa} - \frac{1}{6} \, \sinh^{-4}{\kappa} \right] ,
\end{split}
\]
where
$$
\kappa = \alpha |\tau| + {\textmd{arccoth}} \, \sqrt{2}, \quad \tau \in \R.
$$

Thus, general solutions of~\eqref{higher_term_+}, \eqref{higher_term_-} can be represented with formula
\begin{equation}\label{sing_part_peakon_j}
\begin{split}
& v_j^{\pm} (t, \tau) = - v_{0\tau} (t, \tau) \int_{0}^{\tau} \Phi_j^{\pm}(t, \tau)  w_{0}(t, \tau) d \tau +  w_{0}(t, \tau) \int_{0}^\tau \Phi_j^{\pm}(t, \tau) v_{0\tau}(t, \tau) d \tau \\[1mm]
& \hskip2cm + c_{j 1}^{\pm} v_{0\tau}(t, \tau) + c_{j 2}^{\pm} w_{0}(t, \tau),
\end{split}
\end{equation}
where $ c_{j 1}^{+} $, $ c_{j 1}^{-} $, $ c_{j 2}^{+} $, $ c_{j 2}^{-} $ are constants taken in order to satisfy conjugation conditions at $ \tau = 0 $, i.e.:
$$
c_{j 1}^{+} v_{0\tau}(t, 0) + c_{j 2}^{+} w_{0}(t, 0) = c_{j 1}^{-} v_{0\tau}(t, 0) + c_{j 2}^{-} w_{0}(t, 0).
$$

Accordingly the choice of the values $ c_{j 1}^{+} $, $ c_{j 1}^{-} $, $ c_{j 2}^{+} $, $ c_{j 2}^{-} $ formula~\eqref{sing_part_peakon_j} yields the function $ v_{j} (t, \tau) $, $ j = 1, \ldots, N $, that is continuous and belongs to the space $ {\widetilde G_1^{\pm}} $. Analogously to the procedure of prolonging the terms of asymptotic soliton-like solutions in the case $ v_j(t, \tau) \in {\widetilde G}_0 $, $ j =1, \ldots, N $ (see Section 3.5), we prolong the function $ v_{j} (t, \tau) \in {\widetilde G}^{\pm} $, $ j = 1, \ldots, N $, as $ V_j(x, t, \tau) = v_{j} (t, \tau) $, $ j = 1, \ldots, N $. It is clear that this function has a peak at $ \tau = 0 $ and belongs to the space $ {G_1^{\pm}} $.

\begin{theorem} \label{Theo_5} Let the following conditions be assumed:
\begin{enumerate}
\item[1.]
Functions $ a_j(x, t) $, $ b_j(x, t)  \in C^{\infty} (\R\times [0;T])$, $ j = 0, 1, \ldots, N $;
\item[2.]
Inequalities $ a_0(\varphi(t), t) > 0 $, $ b_0(\varphi(t), t) > 0 $, $ t \in [0;T] $, hold, where the phase function $ \varphi(t) $, $ t \in [0;T] $, is a solution of equation~\eqref{disc_curve_peakon};

\item[3.] The functions $ v_j(t, \tau) $, $ j = 1, \ldots, N $, defined by formulas~\eqref{sing_peakon_ex}, belong to the space $ {\widetilde G^{\pm}} $.
\end{enumerate}

Then the function
\begin{equation}\label{as_sol_peakon}
Y_N (x, t, \varepsilon) = \sum_{j=0}^N \varepsilon^j V_j (x, t, \tau), \quad \displaystyle \tau =
\frac{x-\varphi(t)}{\varepsilon},
\end{equation}
is the $N$--th asymptotic approximation of the peakon-like solution of equation (\ref{CHE_vc}) and satisfies the equation on the set
$$
\{(x, t)\in \R\times [0;T]: x-\varphi(t) > 0\} \cup \{(x, t)\in \R\times [0;T]: x-\varphi(t) < 0 \}
$$
with an asymptotic accuracy $ O(\varepsilon^N) $. In addition, as $ \tau \to \pm\infty $ function~\eqref{as_sol_peakon} satisfies \eqref{CHE_vc} with an  asymptotic accuracy $ O(\varepsilon^{N+1}) $.
\end{theorem}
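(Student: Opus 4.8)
The plan is to follow the scheme already used for Theorem~\ref{Theo3}, namely to estimate the discrepancy obtained by inserting the approximation~\eqref{as_sol_peakon} into equation~\eqref{CHE_vc}, but now carrying out the computation separately on the two open regions $\{x>\varphi(t)\}$ and $\{x<\varphi(t)\}$, on each of which every $V_j$ is smooth, the corner at $\tau=0$ being excluded by the statement. First I would substitute $u=Y_N$ and use that, acting on $V_j\bigl(x,t,(x-\varphi(t))/\varepsilon\bigr)$, the total derivatives $\tfrac{d}{dx}$ and $\tfrac{d}{dt}$ amount to the replacements $\partial_x+\varepsilon^{-1}\partial_\tau$ and $\partial_t-\varepsilon^{-1}\varphi'\partial_\tau$. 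Because each of the terms $a\,u_t$, $b\,u^2u_x$, $\varepsilon^2 u_{xxt}$, $\varepsilon^2 u_xu_{xx}$ and $\varepsilon^2 uu_{xxx}$ then carries a leading factor $\varepsilon^{-1}$, the residual admits an expansion in powers of $\varepsilon$ starting at order $\varepsilon^{-1}$.

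Next I would localise in a tubular neighborhood $\Omega_\mu(\Gamma)$, writing $x=\varphi(t)+\varepsilon\tau$ and Taylor-expanding the smooth coefficients $a(\varphi+\varepsilon\tau,t,\varepsilon)$ and $b(\varphi+\varepsilon\tau,t,\varepsilon)$ about $x=\varphi(t)$, combined with their $\varepsilon$-series~\eqref{coeff}. Collecting like powers of $\varepsilon$ in the residual, the coefficient of the lowest order $\varepsilon^{-1}$ is precisely the left-hand side of the main-term equation~\eqref{sing_part_02} (equivalently~\eqref{eq_v_0}), which vanishes by the explicit construction~\eqref{sol_main-term_peakon} of $V_0$; the coefficient of $\varepsilon^{\,k-1}$ for $k=1,\ldots,N$ reproduces the one-sided equations~\eqref{higher_term_+}--\eqref{higher_term_-}, which the restrictions $v_j=V_j|_\Gamma$ satisfy by~\eqref{sing_part_peakon_j}. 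Hence all terms up to order $\varepsilon^{N-1}$ cancel on each side of $\Gamma$, and the first surviving contribution is of order $\varepsilon^N$, giving the claimed accuracy $O(\varepsilon^N)$ on $\{x>\varphi\}\cup\{x<\varphi\}$.

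To obtain the sharper estimate $O(\varepsilon^{N+1})$ in the far field, I would exploit the defining property of $G^{\pm}$ and $\widetilde G^{\pm}$: the profiles $V_j$ and all their $\tau$-derivatives decay faster than any power of $\tau$ as $|\tau|\to+\infty$. Since $u\equiv0$ solves~\eqref{CHE_vc} and $Y_N\to0$ away from $\Gamma$, the surviving order-$\varepsilon^N$ term itself is a finite sum of products of polynomial-in-$\tau$ Taylor factors $\tau^k$ with such rapidly decaying profiles, and therefore still decays rapidly; consequently, for $(x,t)$ bounded away from the discontinuity curve (where $|\tau|=|x-\varphi|/\varepsilon\to+\infty$ as $\varepsilon\to0$) this remainder is dominated by $\varepsilon^{N+1}$ on any compact set, in the sense of the $O$-notation fixed in the introduction.

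The hard part will be the careful bookkeeping near the corner: one must verify that the two one-sided expansions~\eqref{sing_part_peakon_j}, which were matched only for continuity at $\tau=0$, each \emph{individually} reduce the discrepancy to $O(\varepsilon^N)$ on their own half-neighborhood, and that the sign branch chosen in the first-order ODE~\eqref{eq_v_0_1} yields the correct sign of $V_{0\tau}$ in each region, so that the grading above is genuinely valid on both sides. A secondary technical point is the systematic control of the competition between the polynomial growth in $\tau$ produced by the Taylor coefficients of $a$ and $b$ and the rapid decay of the singular profiles; this is exactly the estimate that the spaces $G^{\pm}$ are designed to guarantee, and it is what ensures that the remainders are uniform on compacts in $(x,t)$ up to (but excluding) the curve $\Gamma$. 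With these points settled, the conclusion of Theorem~\ref{Theo_5} follows exactly as in the soliton-like case of Theorem~\ref{Theo3}.
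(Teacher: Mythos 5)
Your proposal is correct and follows essentially the same route as the paper: the paper's own proof of Theorem~\ref{Theo_5} is just a reference to the scheme of Theorem~\ref{Theo3} (substitute the ansatz, Taylor-expand $a$ and $b$ about the curve $\Gamma$ in $\Omega_\mu(\Gamma)$, cancel the coefficients of $\varepsilon^{-1},\ldots,\varepsilon^{N-1}$ using the defining equations of the $V_j$, and invoke the rapid decay of the profiles for the $O(\varepsilon^{N+1})$ far-field accuracy), carried out on the two half-regions $\tau\ge 0$ and $\tau<0$ to account for the corner at $\tau=0$. Your write-up is in fact more explicit than the paper's sketch, and the technical points you flag (one-sided cancellation, sign branches, polynomial growth versus decay) are precisely the ones the paper leaves implicit.
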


Proof of Theorem~\ref{Theo_5} is done analogously to the case of the asymptotic one-phase soliton-like solution of the vcmCH equation with a singular perturbation of form~\eqref{CHE_vc} for the case when the all functions of the constructed asymptotics belong to the space $ G_0 $. It should be also noted that, despite the discontinuity in the variable $\tau$ of the derivatives of the singular terms of the asymptotic peakon-like solution, this solution satisfies this equation with the accuracy declared in Theorem~5.

\subsection{Example}

Let us consider the vcmCH equation with a singular perturbation of the form:
\begin{equation} \label{example_3}
\left[1 + \varepsilon\left (\frac{1}{36} x^2 + 1 \right)\right] u_t - \varepsilon^2 u_{xxt} + \left[1 + \varepsilon(t^2 + 1)\right] u^2 u_x - 2 \varepsilon^2 u_x u_{xx} - \varepsilon^2 uu_{xxx} = 0.
\end{equation}

The coefficients of the equation are given as:
$$
a_0(x,t) = b_0(x, t) = 1, \quad
a_1(x,t) = \frac{1}{36} x^2 + 1, \quad b_1(x,t) = t^2 + 1,
$$
$$
a_2(x, t) = b_2(x, t) = a_3(x, t) = b_3(x, t) = \ldots = 0.
$$

The phase function $ \varphi = \varphi(t) $ for the discontinuity curve of the peakon-like asymptotic solution can be found from the first-order ODE
$$
\frac{d \varphi}{d t} = 6,
$$
that has global solution $\varphi(t) = 6 t $ satisfying initial condition $ \varphi(0) = 0 $. It should be mentioned that the discontinuity curve is global.

Formula~\eqref{sol_main-term_peakon} with $ a_0(x, t) = b_0(x, t) = 1 $, yields the main term of the asymptotic peakon-like solution as:
\begin{equation}\label{example_sol_main-term_peakon}
V_0(x, t, \tau) = v_0(t, \tau) = 6 \, \sinh^{-2} \kappa ,
\end{equation}
where
$$
\kappa =  \frac{|\tau|}{2} +  \textrm{arccoth} \, \sqrt{2},  \quad \tau = \frac{x-6t}{\varepsilon}.
$$

Let us move on to definition of the first term $ v_1(t, \tau) $ of the asymptotic solution. In this case the functions $ \Phi_1^{+} (t, \tau) $, $ \Phi_1^{-} (t, \tau) $ are given as:
$$
\Phi_1^{\pm} (t, \tau) = 36 (t^2 + 1) \sinh^{-2} \kappa_{\pm} -72 (t^2 + 1) \sinh^{-6}\kappa_{\pm},
$$
where the values $ k_+ $, $ k_- $ are as:
$$
\kappa_{\pm} = \pm \frac{\tau}{2} +  \textrm{arccoth} \, \sqrt{2}.
$$

According to formula~\eqref{2fund} the function $ w_0(t, \tau) $ is written as:
$$
w_0(t, \tau) = \frac{1}{9} \, \left[  \frac{5}{8} \coth^{2}{\kappa}- \frac{1}{4} \cosh^2{\kappa} - \frac{35}{32} \, |\tau | \coth {\kappa} \sinh^{-2}{\kappa} 
+ \frac{5}{3} \, \sinh^{-2}{\kappa} - \frac{1}{6} \, \sinh^{-4}{\kappa} \right] . 
$$

In general, the functions $ v_1^{+} (t, \tau) $, $ v_1^{-} (t, \tau) $ for the first term $ v_1(t, \tau) $ in \eqref{sing_peakon_ex} are correspondingly represented as:
\begin{equation}\label{ex_3_peakon_1}
v_1^{\pm} (t, \tau) = v_{1}^{\pm , 1} (t, \tau) + v_{1}^{\pm , 2} (t, \tau) + c_{1 1}^{\pm} v_{0\tau} (t, \tau) + c_{1 2}^{\pm} w_{0} (t, \tau),
\end{equation}
where
\begin{equation}\label{ex_3_peakon_2}
v_{1}^{ \pm , 1}(t, \tau) = - v_{0\tau} (t, \tau) \int_{0}^{\tau} \Phi_1^{\pm}(t, \tau)  w_{0} (t, \tau) d \tau  ,
\end{equation}
\begin{equation}\label{ex_3_peakon_3}
v_1^{\pm , 2}(t, \tau) = w_{0} (t, \tau) \int_{0}^\tau \Phi_1^{\pm}(t, \tau) v_{0\tau}(t, \tau) d \tau .
\end{equation}

The values $ c_{11}^+ =  c_{11}^+ (t) $, $ c_{11}^- = c_{11}^- (t) $ have to obey to relation $ c_{11}^+ (t) = c_{11}^- (t) $, $ t \in \R $, following from the continuity condition of the function $ v_1(t, \tau) $ at point $ \tau = 0 $. It allows us to put $ c_{11}^+ (t) = c_{11}^- (t) = 0 $, $ t \in \R $.

\begin{figure}[h]
\centering
\includegraphics[scale=0.7]{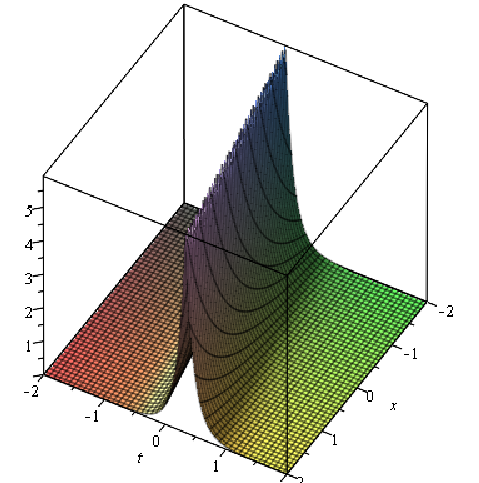}\qquad
\includegraphics[scale=0.7]{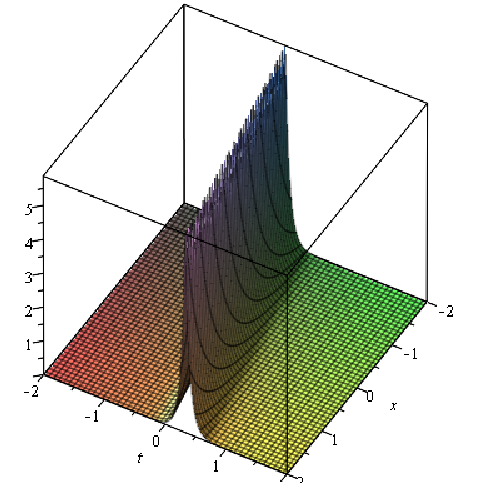}
\caption{ The main term of the asymptotic peakon-like solution $ v_0 (t, \tau) $ as $\varepsilon=1$ (at the left) and
$\varepsilon=0.5$ (at the right).} \label{fig:4}
\end{figure}

\begin{figure}[h]
\centering
\includegraphics[scale=0.7]{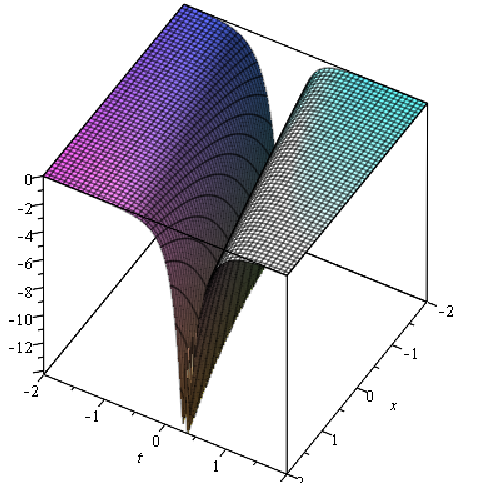} \qquad
\includegraphics[scale=0.7]{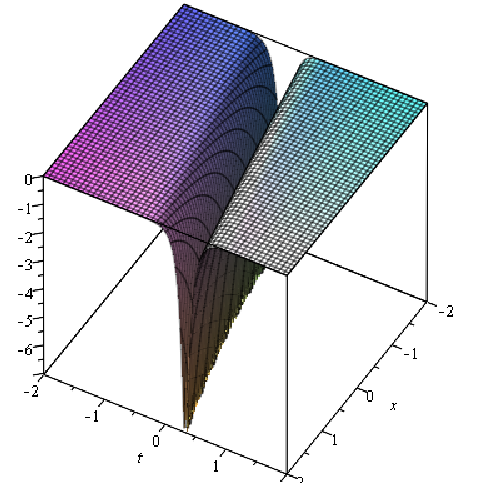}
\caption{ The term $ \varepsilon V_1 (x, t, \tau) $ as $\varepsilon=1 $ (at the left) and
$\varepsilon=0.5$ (at the right).} \label{fig:5}
\end{figure}

\begin{figure}[h]
\centering
\includegraphics[scale=0.7]{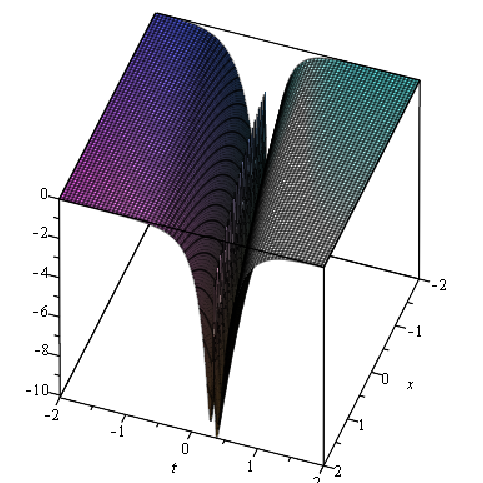}\qquad
\includegraphics[scale=0.7]{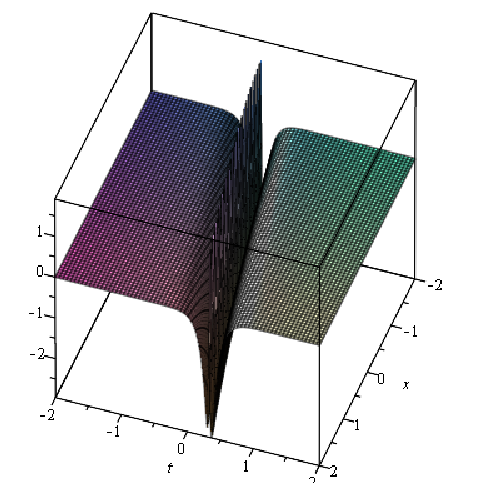}
\caption{ The first asymptotic approximation of the peakon-like solution $Y_1 (x, t, \varepsilon)$ as $\varepsilon=1$ (at the left) and
$\varepsilon=0.5$ (at the right).} \label{fig:6}
\end{figure}

Taking into account a form of the functions $ w_{0} (t, \tau) $ and $ \Phi_1^{\pm}(t, \tau) $ we get that functions $ \Phi_1^{\pm}(t, \tau)  w_{0} (t, \tau) $ are bounded in $ \tau $ for any $ t \in [0;T] $. This results that $ v_1^{\pm , 1}(t, \tau) $ are rapidly decreasing functions as $ |\tau| \to \pm \infty $.

For the functions $ v_1^{\pm , 2}(t, \tau) $ we can calculate their exact values as:
$$
v_1^{\pm , 2}(t, \tau) = 108 \ (t^2+1) \ w_{0}(t, \tau) \ \left[ \sinh^{-4}\kappa_{\pm} - \sinh^{-8}\kappa_{\pm} - C_1 \right]
$$
with constant
$$
C_1 =  \sinh^{-4} { \left( \textrm{arccoth} \, \sqrt{2} \right)} - \sinh^{-8} { \left( \textrm{arccoth} \, \sqrt{2}\right)} .
$$

It is clear, that the function $ v_1(t, \tau) $ constructed via formulae \eqref{sing_peakon_ex}, \eqref{ex_3_peakon_1}--\eqref{ex_3_peakon_3}
belongs to the space $ G^{\pm} $ if the condition $ c_{12}^+ = c_{12}^- = 108 \ (t^2+1) C_1 $ holds.

Thus, the first term $ V_1(x, t, \tau) $ is written as:
\begin{equation}\label{example_3_first_term}
\begin{split}
&  V_1(x, t, \tau) = v_1(t, \tau) = (t^2+1) \, \sinh^{-2} \kappa \, \left[
- \frac{6071}{315} \, - 6  \, \left| \tau \right| \coth\kappa  - \frac{4519}{1260}  \, \sinh^{-2}\kappa \right.\\[2mm]
& \hskip3cm + \frac{105}{8}  \, | \tau| \, \sinh^{-3} \kappa  + \frac{1787}{126}  \, \sinh^{-8} \kappa - \frac{7381}{6300} \,  \sinh^{-4} \kappa  \\[2mm]
& \hskip3cm  + \frac{105}{16}  \, |\tau|  \sinh^{-5} \kappa  + \frac{4231}{1260} \,  \sinh^{-6} \kappa - \frac{105}{16} \, |\tau| \cosh \kappa \sinh^{-9} \kappa
\\
& \hskip3cm \left. - \frac{7}{9}  \, \sinh^{-10}\kappa  \right], \quad \kappa =  \frac{|\tau|}{2} +  \textrm{arccoth} \, \sqrt{2} .
\end{split}
\end{equation}

Finally, the first asymptotic approximation for peakon-like solution to equation~\eqref{example_3} is global and it is given as:
\begin{equation} \label{as_sol_sol_example_p}
u_1(x,t,\varepsilon) = V_0(t, \tau) + \varepsilon V_1(t, \tau), \quad \tau = \frac{x - 6 t}{\varepsilon},
\end{equation}
where the functions $ V_0(t, \tau) $, $ V_1(t, \tau) $ are defined with \eqref{example_sol_main-term_peakon}, \eqref{example_3_first_term}.

According to Theorem 5 function~\eqref{as_sol_sol_example_p} satisfies equation~\eqref{example_3} with an asymptotic accuracy $ O(\varepsilon) $. In addition, as $ \tau \to \pm\infty $ this function satisfies~\eqref{example_3} with an asymptotic accuracy $ O(\varepsilon^{2}) $.

Graphs of the main and first terms of the asymptotic peakon-like solution as well as of the first approximation are presented on Fig.~\ref{fig:4}--\ref{fig:6} for a small parameter $ \varepsilon = 1 $ and $ \varepsilon = 0.5 $.

\section{Conclusions and discussions} The paper deals with the construction of asymptotic soliton- and peakon-like solutions to the vcmCH equation with a singular perturbation. This equation is a generalization of the well known modified Camassa--Holm equation~\eqref{CHE_cons_mod} which is integrable system and in addition to the soliton solutions the equation has the peakon solutions. The system as well as the KdV equation and the BBM equation describes spreading waves in shallow water. Therefore, it's naturally to study the problem of finding solutions for the vcmCH-equation that are close to the solutions of the solitary wave type, in particular, peakon-like solutions and soliton-like solutions.
Because equation~\eqref{CHE_vc} contains the variable coefficients, there are not methods of construction of their solutions in the general case. But due to the availability of a small parameter, the tools of the perturbation theory, in particular the WKB method, can be successfully applied  for solving the problems.

Asymptotic peakon-like solutions for a partial differential equation with variable coefficients are constructed for the first time. The novelty of the ideas of this paper lies in the development of a technique for this purpose. The developed technique is based on the results of \cite{Sam_2005, Sam_JMP, Sam_Burgers} that concern finding the asymptotic soliton-like solutions to the KdV and the BBM equations as well as of asymptotic step-like solutions to the Burgers equation. In a certain sense, the obtained asymptotic soliton- and peakon-like solutions are
deformations of traveling-wave-type solutions, which arise due to the presence of variable coefficients.
In the paper, a general scheme of finding asymptotic approximation of any order is presented and the accuracy of these asymptotic approximations is found.

In our view, the obtained results are interesting because they are illustrated by examples of both asymptotic soliton- and peakon-like solutions. In particular, the main and the first terms of these solutions are found. Moreover, for various values of a small parameter, the graphs that demonstrate kind of the solutions are presented. Note also that the considered examples confirm that for an adequate description of wave processes it is enough to obtain the main and the first terms of the corresponding asymptotic solutions.
These results also argue that the proposed technique can be used for constructing asymptotic wave-like solutions of other equations.

\section*{CRediT authorship contribution statement}

\textbf{Lorenzo Brandolese:} Conceptualisation, Validation,  Writing -- original draft, Writing -- review \& editing, Supervision, Funding acquisition.
\textbf{Valerii Samoilenko:} Conceptualisation, Methodology,  Validation, Writing -- original draft, Writing -- review \& editing.
\textbf{Yuliia Samoilenko:} Conceptualisation, Methodology, Validation, Formal analysis, Investigation, Writing -- original draft, Writing -- review \& editing, Visualization.

\section*{Declaration of competing interest}

The authors declare that they have no known competing financial interests or personal relationships that could have appeared
to influence the work reported in this paper.

\section*{Data availability}

No data was used for the research described in the article.

\section*{Acknowledgement}

The research is partially supported by Programme PAUSE  and the ANR, project N.~ANR-22-PAUK-0038-01.

The authors are very appreciative to their colleagues Sergiy Lyashko (Faculty of Computer Sciences and Cybernetics, Taras Shevchenko National University of Kyiv, Ukraine) and  Anatoliy Prykarpatskiy (Institute of Mathematics, Krakow University of Technology, Poland) for discussions of the paper's results, comments and useful remarks. Finally, we would like to express our sincere thanks to the anonymous reviewers whose remarks and comments were very helpful and allowed us to improve our manuscript.



\begin{thebibliography}{99}

\bibitem{Calogero}
{F.~Calogero, A.~Degasperis, {Spectral Transform and Solitons: Tools to Solve and Investigate Nonlinear Evolution Equations}, 1982, North-Holland, Amsterdam, 510~p.} \\ ISBN~10: 0444863680  ISBN 13: 9780444863683

\bibitem{KdV}
{D.~J.~Korteweg, G.~de~Vries, {On the change of form of long waves advancing in a rectangular canal, and on a new type of long stationary waves}, Philosophical Magazine, Series~5, 1895, Vol.~39, Number~240, pp.~422--443.} \\ http://dx.doi.org/10.1080/14786449508620739

\bibitem{Zabusky}
{N.~J.~Zabusky, M.~D.~Kruskal, {Interaction of solitons in a collisionless plasma and recurrence of initial states}, Phys. Rev. Lett., 1965, Vol.~15, pp.~240--243. \\ https://doi.org/10.1103/PhysRevLett.15.240}

\bibitem{Bullough}
{{Solitons}, edited by R.~K.~Bullough, Ph.~J.~Caudrey, 1980, Springer-Verlag Berlin Heidelberg, 410~p. \\ ISBN-13: 978-3540099628}

\bibitem{Ablowitz}
{M.~J.~Ablowitz, {Nonlinear Dispersive Waves. Asymptotic Analysis and Solitons}, 2011, Cambridge, Cambridge University Press,  348~p.} \\ https://doi.org/10.1017/CBO9780511998324

\bibitem{Mykytyuk}
{A.~K.~Prykarpatsky, I.~V.~Mykytiuk, {Algebraic Integrability of Nonlinear Dynamical Systems on Manifolds. Classical and Quantum Aspects}, 1998, Dordrekht, Kluwer Academic Publishers, 553~p.} \\ https://doi.org/10.1007/978-94-011-4994-5  ISBN~978-0-7923-5090-3

\bibitem{Blacmore}
{D.~Blacmore, A.~K.~Prykarpatsky, V.~Hr.~Samoylenko, {Nonlinear Dynamical Systems of Mathematical Physics. Spectral and Integrability Analysis}, 2011, Singapore: World Scientific, 564~p.} \\ https://doi.org/10.1142/7960

\bibitem{Dodd}
{R.~K.~Dodd, J.~C.~Eilback, J.~D.~Gibbon, H.~C.~Morris, {Solitons and Nonlinear Wave Equations}, 1982, Academic Pr., 630~p.} \\ ISBN-13: 978-0-12-219120-6, ISBN: 0-12-219120-X

\bibitem{Newell}
{A.~C.~Newell, {Solitons in Mathematics and Physics}, 1985, Society for Industrial and Applied Mathematics, 244~p.} \\ ISBN~9781611970227

\bibitem{BBM}
{T.~B.~Benjamin, J.~L.~Bona, J.~J.~Mahony, {Model equations for long waves in nonlinear dispersive systems}, Philos. Trans.
Royal Soc. A, 1972, Vol.~272, pp.~47--78.} \\ https://doi.org/10.1098/rsta.1972.0032

\bibitem{Peregrin}
{D.~H.~Peregrin, {Calculations of the development of an undular bore}, J. Fluid Mech., 1966, Vol.~25, Issue~2, pp.~321--330.} \\  https://doi.org/10.1017/S0022112066001678

\bibitem{Dutykh}
{A.~Cheviakov, D.~Dutykh, {Galilei-invariant and energy-preserving extension of Benjamin--Bona--Mahony--type equations}, Partial Differ. Equ. Appl. Math., 2023, Vol.~7, pp.~100519. \\ https://doi.org/10.1016/j.padiff.2023.100519 }

\bibitem{Joseph}
{R.~I.~Joseph, R.~G.~Egri, {Another possible model equation for long waves in nonlinear dispersive systems}, Phys. Lett. A, 1977, Vol.~61, Issue~7, pp.~429--430.} \\
https://doi.org/10.1016/0375-9601(77)90739-3

\bibitem{Fan}
{F.~Fan, Yu.~Zhou, Q.~Liu, {Bifurcation of traveling wave solutions for the Joseph-Egri equation}, Rep. Math. Phys., 2019, Vol.~83, Issue~2, pp.~175--190.} https://doi.org/10.1016/S0034-4877(19)30038-2

\bibitem{Chiu}
{P.~H.~Chiu, Long~Lee, W.~H.~Sheu~Tony, {A dispersion-relation-preserving algorithm for a nonlinear shallow-water wave equation}, J. Comput. Phys., 2009, Vol.~228, Issue 21, pp.~8034--8052.} \\ https://doi.org/10.1016/j.jcp.2009.07.030

\bibitem{Chen}
{R.~M.~Chen, T.~Hu, Y.~Liu, {The shallow-water models with cubic nonlinearity}, J. Math. Fluid Mech., 2022, Vol.~24., Article number: 49.} \\ https://doi.org/10.1007/s00021-022-00685-4

\bibitem{FF}
{B.~Fuchssteiner, A.~S.~Fokas, {Symplectic structures, their B\"{a}cklund transformations and hereditary symmetries}, {Phys. D},  {1981/82}, Vol.~4, Issue~1,  pp.~{47--66}.} \\ https://doi.org/10.1016/0167-2789(81)90004-X

\bibitem{CamassaHolm}
{R.~Camassa, D.~Holm, {An integrable shallow water equation with peaked soliton}, Phys. Rev. Lett., 1993, September, Vol.~71, N.~11, pp.~1661--1664. \\
https://doi.org/10.1103/PhysRevLett.71.1661 }

\bibitem{Chen_1}
{S.~Chen, C.~Foias, D.~D.~Holm, E.~Olson, E.~S.~Titi, S.~Wynne, {The Camassa--Holm equations and turbulence}, Phys. D, 1999, Vol.~133, N.~1, pp.~49--65.} \\ https://doi.org/10.1016/S0167-2789(99)00098-6

\bibitem{Shepelskiy}
{A.~Boutet~de~Monvel, I.~Karpenko, D.~Shepelsky, {Riemann-Hilbert approach for the Camassa--Holm equation with nonzero boundary conditions}, J. Math. Phys., 2020, Vol.~61, pp.~031504.} \\ https://doi.org/10.1063/1.5139519

\bibitem{Constantin_1}
{A.~Constantin, V.~S.~Gerdjikov, R.~I.~Ivanov, {Inverse scattering transform for the Camassa--Holm equation}, Inverse Probl., 2006, Vol.~22, pp.~2197--2207.} \\ http://dx.doi.org/10.1088/0266-5611/22/6/017

\bibitem{Parker}
{A.~Parker, On the Camassa-Holm equation and a direct method of solution I. Bilinear form and solitary waves, Proc. R. Soc. Lond. A., 2004, Vol.~460, N.~2050, 2929--2957.}\\
https://doi.org/10.1098/rspa.2004.1301

\bibitem{Rasin}
{A.~G.~Rasin, J.~Schiff, {B\"{a}cklund transformations for the Camassa--Holm equation}, J. Nonlinear Sci., 2017, Vol.~27, pp.~45--69. \\ https://doi.org/10.1007/s00332-016-9325-6 }

\bibitem{Sheu}
{W.~H.~Sheu~Tony, P.~H.~Chiu, C.~H.~Yu, {On the development of a high-order compact scheme for exhibiting the switching and dissipative solution natures in the Camassa--Holm equation}, J. Comput. Phys., 2011, Vol.~230, Issue~13, pp.~5399--5416. \\ https://doi.org/10.1016/j.jcp.2011.03.043}

\bibitem{Constantin_2}
{A.~Constantin, D.~Lannes, {The hydrodynamical relevance of the Camassa--Holm and Degasperis-Processi equations}, Arch. Ration. Mech. Anal., 2009, Vol.~192, pp.~165--186.} \\ https://doi.org/10.1007/s00205-008-0128-2

\bibitem{Constantin-Escher}
{A.~Constantin, J.~Escher, {Wave breaking for nonlinear nonlocal shallow water equations}, {Acta Math.}, 1998,  Vol.~181, N.~2, pp.~229--243.} \\  https://doi.org/10.1007/BF02392586

\bibitem{BrandoleseCMP}
{L.~Brandolese, {Local-in-space criteria for blowup in shallow water and dispersive  rod equations}, {Comm. Math. Phys.}, 2014, Vol.~{330},  N.~{1}, pp.~{401--414}.} \\ https://doi.org/10.1007/s00220-014-1958-4

\bibitem{Shepelskiy_1}
{I.~Karpenko, D.~Shepelsky, G.~Teschl, {A Riemann-Hilbert approach to the modified Camassa--Holm equation with step-like boundary conditions}, Monatsh. Math., 2023, Vol.~201, Issue~1, pp.~127--172. } \\ https://doi.org/10.1007/s00605-022-01786-y

\bibitem{Gao}
{Yu.~Gao, {On conservative sticky peakons to the modified Camassa--Holm equation}, J. Differ. Equ., 2023, Vol.~365, pp.~486--520.} \\ https://doi.org/10.1016/j.jde.2023.04.027

\bibitem{Guo}
{Z.~Guo, X.~Liu, X.~Liu, Ch.~Qu, {Stability of peakons for the generalized modified Camassa--Holm equation}, J. Differ. Equ., 2019, Vol.~266, Issue 12,  pp.~7749--7779.} \\ https://doi.org/10.1016/j.jde.2018.12.014

\bibitem{McLachlan}
{R.~McLachlan, X.~Zhang, {Well-posedness of modified Camassa--Holm equations}, J. Differ. Equ., 2009, Vol.~246, Issue 8, pp.~3241-3259.} \\ https://doi.org/10.1016/j.jde.2009.01.039

\bibitem{ChenLQZ}
{R.~M.~Chen, Y.~Liu, C.~Qu, S.~Zhang, {Oscillation-induced blow-up to the modified Camassa--Holm equation with linear dispersion}, Adv. Math., 2015, Vol.~272, pp.~225--251.} \\  https://doi.org/10.1016/j.aim.2014.12.003

\bibitem{Asif}
{A.~Yokus, H.~Durur, K.~Abro, {Role of shallow water waves generated by modified Camassa--Holm equation: A comparative analysis for traveling wave solutions}, Nonlinear Eng., 2021, Vol.~10, N.~1., pp.~385--394. \\ https://doi.org/10.1515/nleng-2021-0030}

\bibitem{Degasperis_Procesi}
{A.~Degasperis, M.~Procesi, {Asymptotic integrability. In: Degasperis A., Gaeta G., editors. Symmetry and perturbation theory}, Singapore: World Scientific, 1999, pp.~23--37.} \\ https://doi.org/10.1142/9789812833037

\bibitem{Brandolese}
{L.~Brandolese, {A Liouville theorem for the Degasperis--Procesi equation}, Ann. Sc. norm. super. Pisa, 2016, Vol.~16, N.~3, pp.~759--765.} \\ https://doi.org/10.2422/2036-2145.201410$_{-}$014

\bibitem{Sam_2005}
{V.~H.~Samoylenko, Yu.~I.~Samoylenko, {Asymptotic expansions for one-phase soliton-type solutions of the Korteweg--de Vries equation with variable coefficients}, Ukr. Math. J., 2005, Vol.~57, N.~1, pp.~132--148. \\ https://doi.org/10.1007/s11253-005-0176-9}

\bibitem{Sam_2012_1}
{V.~H.~Samoylenko, Yu.~I.~Samoylenko, {Asymptotic $m$-phase soliton-type solutions of a singularly perturbed Korteweg--de Vries equation with variable coefficients}, Ukr. Math. J., 2012, Vol.~64, N.~7, pp.~1109--1127. \\ https://doi.org/10.1007/s11253-012-0702-5}

\bibitem{Sam_2012_2}
{V.~H.~Samoylenko, Yu.~I.~Samoylenko, {Asymptotic $m$-phase soliton-type solutions of a singularly perturbed Korteweg--de Vries equation with variable coefficients. II}, Ukr. Math. J., 2013, Vol.~64, N.~8, P.~1241~--~1259. \\ https://doi.org/10.1007/s11253-013-0713-x}

\bibitem{Miura}
{R.~M.~Miura, M.~D.~Kruskal, {Application of nonlinear WKB-method to the Korteweg--de Vries equation}, SIAM Appl. Math., 1974, Vol.~26, N.~2, pp.~376--395.} \\ https://doi.org/10.1137/0126036

\bibitem{Sam_JMP}
{V.~Samoilenko, Yu.~Samoilenko, {Asymptotic soliton like solutions to the singularly perturbed Benjamin-Bona-Mahony equation with variable coefficients}, J. Math. Phys., 2019, Vol.~60, N.~1, pp. 011501--0115013. \\ https://doi.org/10.1063/1.5085291}

\bibitem{Sam_Sigma}
{V.~Samoilenko, Yu.~Samoilenko, {Asymptotic multiphase $ \Sigma$-solutions to the singularly perturbed Korteweg--de Vries equation with variable coefficients}, J. Math. Sci., 2014, Vol.~200, N.~3. pp.~358--373. \\ https://doi.org/10.1007/s10958-014-1918-8}

\bibitem{Sam_MMC_2019}
{V.~H.~Samoilenko, Yu.~I.~Samoilenko, V.O.~Limarchenko, V.S.~Vovk, K.S.~Zaitseva, {Asymptotic solutions of soliton type of the Korteweg--de Vries equation with variable coefficients and singular perturbation}, Math. Model. Comput., 2019, Vol.~6, N.~2, pp.~374--385. \\ https://doi.org/10.23939/mmc2019.02.374}

\bibitem{Sam_Jour_Aut_2021}
{S.~I.~Lyashko, V.~H.~Samoilenko, Yu.~I.~Samoilenko, I.~V.~Gapyak, N.~I.~Lyashko, M.~S.~Orlova, {Global asymptotic step-type solutions to singularly perturbed Korteweg--de Vries equation with variable coefficients}, J. Autom. Inf., 2021, Vol.~52, N.~9, pp.~27--38.} \\  http://doi.org/10.1615/JAutomatInfScien.v52.i9.30

\bibitem{Sam_MMC_2021_2}
{S.~I.~Lyashko, V.~H.~Samoilenko, Yu.~I.~Samoilenko, I.~V.~Gapyak, M.~S.~Orlova, {Asymptotic step-wise solutions of the Korteweg--de Vries equation with a singular perturbation and their accuracy}, Math. Model. Comput., 2021, Vol.~8, N.~3, pp.~410--421.} \\ https://doi.org/10.23939/mmc2021.03.410

\bibitem{Nayfeh}
{A.~Nayfeh, Perturbation Methods, 2004, John Wiley \& Sons, Ltd., 425~p.}\\
https://doi.org/10.1002/9783527617609

\bibitem{LLCh}
{X.~Lu, L.~Lu, A.~Chen, {New peakons and periodic peakons of the modified Camassa--Holm equation}, J. Nonlinear Model. Anal., 2020, Vol.~2, N.~3, pp.~345--353.} \\ http://dx.doi.org/10.12150/jnma.2020.345

\bibitem{Johnson}
{R.~S.~Johnson, {On solutions of the Camassa--Holm equation}, Proc. R. Soc. Lond. A., 2003, Vol.~459, Issue~2035, pp.~1687--1708. } \\ https://doi.org/10.1098/rspa.2002.1078

\bibitem{Sam_Burgers}
{V.~Samoilenko, Yu.~Samoilenko, E.~Zappale, {Asymptotic step-like solutions of the singularly perturbed Burgers equation}, Phys. Fluids, 2023, Vol.~35, N.~6, 067106. \\
https://doi.org/10.1063/5.0150685 }

\bibitem{Fritz}
{F.~John, {Partial Differential Equations}, 4th edition, 1982, Springer, 262~p.} \\ ISBN~978-0-387-90609-6

\bibitem{Evans}
{L.~C.~Evans, {Partial Differential Equations}, Graduate Studies in Mathematics, Vol.~19, 2010, Providence, American Math. Society, 749~p.} \\ ISBN:~978-1-4704-6942-9

\bibitem{Maslov_book}
{V.~P.~Maslov, G.~A.~Omel'yanov, {Geometric Asymptotics for PDE, I}, 2001, Pro\-vi\-dence, American Math. Society, 243~p.} \\ ISBN:~978-0-8218-2109-1

\bibitem{Hor}
{L.~H\"{o}rmander, {The Analysis of Linear Partial Differential Operators III. Pseudodifferential Operators}, 2007, Springer, Berlin, 527~p.} \\ ISSN~1431-0821; ISBN~978-3-540-49997-4

\bibitem{GR1}
{V.~V.~Grushin, {Pseudodifferential operators on $\R^n$ with bounded symbols}, Funct. Anal. Its Appl., 1970, Vol.~4, pp.~202--212.} \\ https://doi.org/10.1007/BF01075240

\bibitem{GR2}
{V.~V.~Grushin, {On a class of elliptic pseudodifferential operators degenerating on a submanifold}, Math. USSR Sb., 1971, Vol.~13, number~2, pp.~155--185.} \\ http://dx.doi.org/10.1070/SM1971v013n02ABEH001033

\bibitem{Sam_Shrod}
{V.~H.~Samoilenko, Yu.~I.~Samoilenko, {Existence of a solution to the inhomogeneous equation with the one-dimensional Schrodinger operator in the space of quickly decreasing functions}, J. Math. Sci., 2012, Vol.~187, N.~1, pp.~70--76.} \\ https://doi.org/ 10.1007/s10958-012-1050-6

\bibitem{Sam_Shrod_1}
{V.~H.~Samoilenko, Yu.~I.~Samoilenko, {Existence of solutions to inhomogeneous higher order differential equation in the Schwartz space}, J. Math. Phys. Anal. Geom., 2020. Vol.~16, N.~4, 454--459. \\ https://doi.org/10.15407/mag16.04.454}

\bibitem{Sam_impl_function}
{V.~H.~Samoilenko, Yu.~I.~Kaplun, {Existence and extendability of solutions of the equation $ g(t, x) = 0 $}, Ukr. Math. J., 2001, Vol.~53, N.~3, P.~427--437. \\  https://doi.org/10.1023/A:1012344305548}

\bibitem{Smith}
{D.~W.~Jordan, P.~Smith, {Nonlinear Ordinary Differential Equations}, 4th edition, 2007, Oxford University Press, 531~p.} \\ ISBN~978-0-19-920824-1~(Hbk)
\end{thebibliography}
\end{document}